\newtheorem{theorem}{Theorem}
\newtheorem*{theorem*}{Theorem}
\newtheorem{lemma}{Lemma}
\newtheorem{proposition}[theorem]{Proposition}
\newtheorem{definition}{Definition}
\newtheorem{conjecture}{Conjecture}
\newcommand{\note}[1]{}%{\marginpar{\textcolor{red}{#1}}}
\DeclareMathOperator{\nullity}{nullity}
\renewcommand{\epsilon}{\varepsilon}
\renewcommand{\subset}{\subseteq}
\newcommand{\red}{\textsf{red}}
\newcommand{\blue}{\textsf{blue}}
\newcommand{\allred}{S_R}
\newcommand{\allblue}{S_B}
\newcommand{\enemies}[2]{N'_{#1}(#2)}
\newcommand{\byflipping}[2]{{#1} \oplus {#2}}%{#1 \neg#2}  %{\Delta_{#1}^{#2}}
\newcommand{\statespace}{\mathbb{S}}
\newcommand{\transmatrix}{\mathbf{P}}
\newcommand{\bgcolor}{b}
\renewcommand{\vec}[1]{\mathbf{#1}}
\title{Environmental Evolutionary Graph Theory}
\author{Wes Maciejewski, Gregory J.~Puleo}
\begin{document}
 
 \maketitle
 \linenumbers
 \begin{abstract}
   Understanding the influence of an environment on the evolution of
   its resident population is a major challenge in evolutionary
   biology. Great progress has been made in homogeneous population
   structures while heterogeneous structures have received relatively
   less attention. Here we present a structured population model where
   different individuals are best suited to different regions of their
   environment. The underlying structure is a graph: individuals
   occupy vertices, which are connected by edges. If an individual is
   suited for their vertex, they receive an increase in fecundity. This
   framework allows attention to be restricted to the spatial
   arrangement of suitable habitat. We prove some basic properties of
   this model and find some counter-intuitive results. Notably, 1) the
   arrangement of suitable sites is as important as their proportion,
   and, 2) decreasing the proportion of suitable sites may result in a
   decrease in the fixation time of an allele.
 \end{abstract}

 \doublespacing
 %\linenumbers
 
\section{Introduction}

It is now well established that population structure can have a profound effect on the outcome of an evolutionary process. Indeed, some of the first results in the modern synthesis of evolution considered island-structured populations \cite{wright31}. Since then, a multitude of structured population models have appeared, including stepping stone \cite{kimuraweiss64}, lattice \cite{nowakmay92, nakamarumatsudaiwasa97}, and metapopulation models \cite{levins69}. A contemporary take on these spatial models is evolutionary graph theory. 

Since its introduction in \cite{liebermanhauertnowak05}, evolutionary
graph theory has gone on to become a well-studied abstraction of
structured populations (see \cite{nowak06a} for an illustrative
introduction and \cite{szabofath07} for an extensive review). An
evolutionary graph is a collection of sites, or \emph{vertices},
linked by interaction and dispersal patterns, or \emph{edges}. Each
vertex is occupied by a single haploid breeder of a certain genotype
-- say, red or blue. Lieberman, Hauert, and
Nowak~\cite{liebermanhauertnowak05} considered a population of
blue-type individuals invaded by a single red individual of higher
fecundity. Subsequent work considered strategic interactions between
the residents of a graph, where ``red'' and ``blue'' are thought of as
the strategies adopted by the individuals. This perspective proved
useful, and evolutionary graphs have gone on to facilitate much
understanding in evolutionary game theory in structured populations
\cite{ohtsukihauertliebermannowak06, taylordaywild07a}. 

Here we introduce environmental evolutionary graph theory as a variant
on evolutionary graph theory. An environmental evolutionary graph is a
graph with vertices of different types. We now assign colours not only
to \emph{individuals}, but also to the vertices of the graph. We
typically consider a two-colour setup: each individual is either red or
blue, and each vertex of the graph is also either red or blue,
independent of the colour of the individual occupying that vertex.  An
individual whose colour matches the vertex on which it resides is given
a higher fecundity, reflecting the individual's adaptation to that
particular environment. We formalize the model in the appendix.

\begin{figure}
\centering
  \subfigure[]{\label{fig:latticeex}\includegraphics[ width=0.3\textwidth]{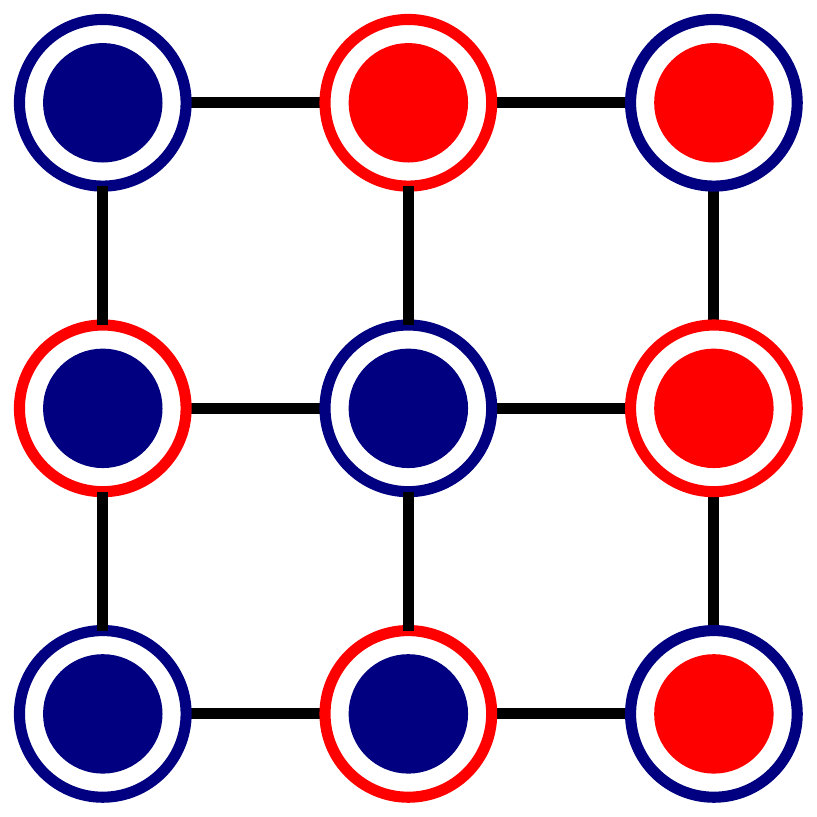}}
 \hspace{1cm}
  \subfigure[]{\label{fig:firstex}\includegraphics[width=0.45 \textwidth]{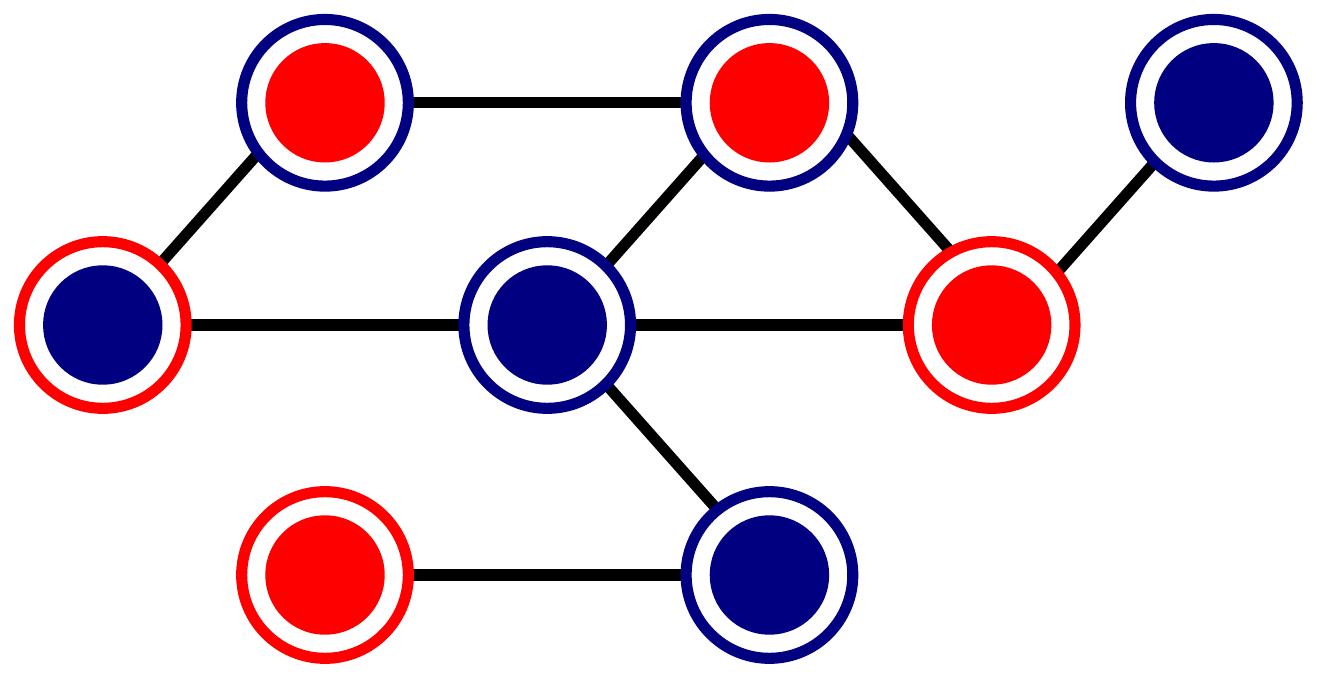}}
\caption{Examples of an environmental evolutionary graphs. The vertices are the thick circles and the individuals are the solid disks. If the colour of an individual $i$ matches the colour of the vertex $v_i$ then that individual is \emph{advantageous} at that vertex, and their fecundity is $f_i = r>1$. Otherwise, $f_i=1$.}
\label{fig:firstexample}
\end{figure}

Enviromental evolutionary graph theory is a fine-grained,
graph-theoretic analogue of a model first proposed in \cite{levene53},
where two niches are considered along with two alleles in the resident
diploid population, each advantageous in exactly one of the niches. It
was found that such a heterogeneous population can maintain a stable
genetic polymorphism even though the heterozygote is less fit than the
homozygote in its favourable niche. This model was later restricted to
haploid populations with multiple alleles at a single genetic locus,
each favoured in a different subset of sites in the environment;
this yields similar results on stable polymorphism
\cite{levinsmacarthur66, gliddonstrobeck75, strobeck79}. 

In the following, we define environmental evolutionary graphs and
prove some of their basic properties. We then extend the basic,
two-colour setup to multicoloured graphs. Doing so allows for
phenomena not present in the two-colour setup to emerge. For example,
the introduction of a third colour can permit a decrease in the time
to fixation of certain invading types. We then conclude with some
future prospects for research.

\section{Basic Properties}

Although the setup provided in the introduction is intuitive, we
require some formal definitions. Let $G$ be a graph on $N$ vertices
labeled $v_1, \ldots, v_N$. Each vertex has a \emph{background colour}
$c_i \in \{R,B\}$; if $c_i=R$, we think of vertex $v_i$ as red, and if
$c_i = B$, we think of vertex $v_i$ as blue. 

A \emph{state} of the model is a vector $(x_1, \ldots, x_N)$, where
each $x_i \in \{R,B\}$; the value of $x_i$ represents whether the
individual on vertex $v_i$ is currently red or currently blue. We call
$x_i$ the \emph{foreground colour} of vertex $v_i$ (in the given
state).  When the graph $G$ is understood, collection of all possible
states on $G$ is denoted $\mathbb{S}$. When all $x_i = R$, we are in
the \emph{all-red state}, which we denote $S_R$; similarly, when all
$x_i = B$ we are in the all-blue state $S_B$. When the process reaches
the all-red state, we say that the red type has achieved
\emph{fixation} in the graph, and likewise for blue. To avoid
ambiguity in phrases such as ``a red vertex'', we will capitalize
background colours; thus, ``a red vertex'' has red foreground colour,
while ``a RED vertex'' has red background colour.

The model has a single parameter $r$, which defines the reward
for an individual to match its color. The \emph{fecundity} of
individual $i$ in a given state is written $f_i$, and is defined by
$f_i = r$ if $x_i = c_i$ and $f_i = 1$ otherwise.

We define two possible transition rules between two-states, a
\emph{birth-death} rule and a \emph{death-birth} rule. These two rules
give rise to two different processes, the birth-death process and the
death-birth process. Both of these rules have been studied heavily in
the literature in the context of non-spatial Moran processes
\cite{moran58, nowaksasakitaylorfudenberg04}.
% Suppose that a graph has some colour scheme consisting of a proportion
% $d$ of RED sites and $1-d$ BLUE sites. Initially only blue inhabit the
% vertices of the graph. Individuals replicate according to either the
% birth-death or death-birth Moran process . \bad{why do we care about the initial
% state here?}

In a step of the birth-death process, we first choose an individual
$i$ reproduce; each individual is chosen with probability equal to
their relative fecundity, given by
\[
\mathbb{P}[\text{$i$ is chosen to reproduce}] = \frac{f_i}{\displaystyle \sum_{k\in V(G)} f_k}
\]
where the sum is taken over all vertices of the graph. Once an
individual is selected to give birth, it produces an offspring that
displaces a neighbour chosen uniformly at random (the offspring cannot
displace its parent). This assumption of uniform dispersal is not
necessary, and we later discuss properties of graphs that exhibit
biased dispersal.

In a step of the death-birth process, we instead start by choosing an
individual at uniform random to die. Neighbouring vertices then compete for
the vacated site according to their relative fecundities. Suppose
an individual $i$ dies. The probability that the neighbour $j$ places
an offspring on the vacant site is
\[
\mathbb{P}[\text{$j$ replaces $i$}] = \frac{f_j}{\displaystyle \sum_{k \in \mathcal{N}(v_i)} f_k}
\]
where the sum is taken over the set $\mathcal{N}(v_i)$ of all vertices adjacent to $v_i$.

Finally, given some initial state $S$ (and with the choice of
transition rule understood), we write $\rho_{R|S}$ to denote the
probability that the red type achieves fixation starting from
the state $S$. Often, we consider a single red mutant arising
at a uniformly selected vertex in an otherwise-blue graph;
the probability that red achieves fixation starting from this
initial distribution is simply written $\rho_R$. 

% During one of the replication events in the all-blue population a mutation occurs, producing a red type. This red type undergoes the same Moran process and will either die out or its progeny will go on to replace all the blue types on the graph. If the graph does go on to become all red, then the mutant red type has \emph{fixed} in the population. The probability that this occurs is the \emph{fixation probability}, which we denote $\rho_R$. 

\subsection{Well-mixed Populations}

A natural first question to ask is, what is the effect of the density
$d$ of RED sites on the fixation probability $\rho_R$ of a red mutant?
To answer this, we first focus on a \emph{complete graph}, where all
pairs of distinct vertices are connected by an edge. This is an
example of a \emph{well-mixed} population. The following theorem
establishes that lowering the density of a type $X$ of sites lowers
the fixation probability of a set of $X$ types in a population
consisting of only two types undergoing a birth-death process.

To illustrate this, consider a well-mixed population $G$ of size $N$ undergoing a birth-death process with density $d$ of RED sites and suppose the fecundity $f_i$ of an individual $i$ that matches the type of their vertex $v_i$ is $f_i=r>1$ and $f_i=1$ otherwise. Using a mean-field approximation (see Appendix), we arrive at an equation for the fixation probability of a set of $m$ randomly placed red types, 
\begin{eqnarray}
\label{eq:fixprob}
\rho_{R|m}= \dfrac{1-\left( \dfrac{r(1-d)+d)}{(1-d)+rd}\right )^m }{1-\left(\dfrac{r(1-d)+d)}{(1-d)+rd}\right)^N}.
\end{eqnarray}

\begin{figure}
 \centering
 \includegraphics[width=0.75\textwidth]{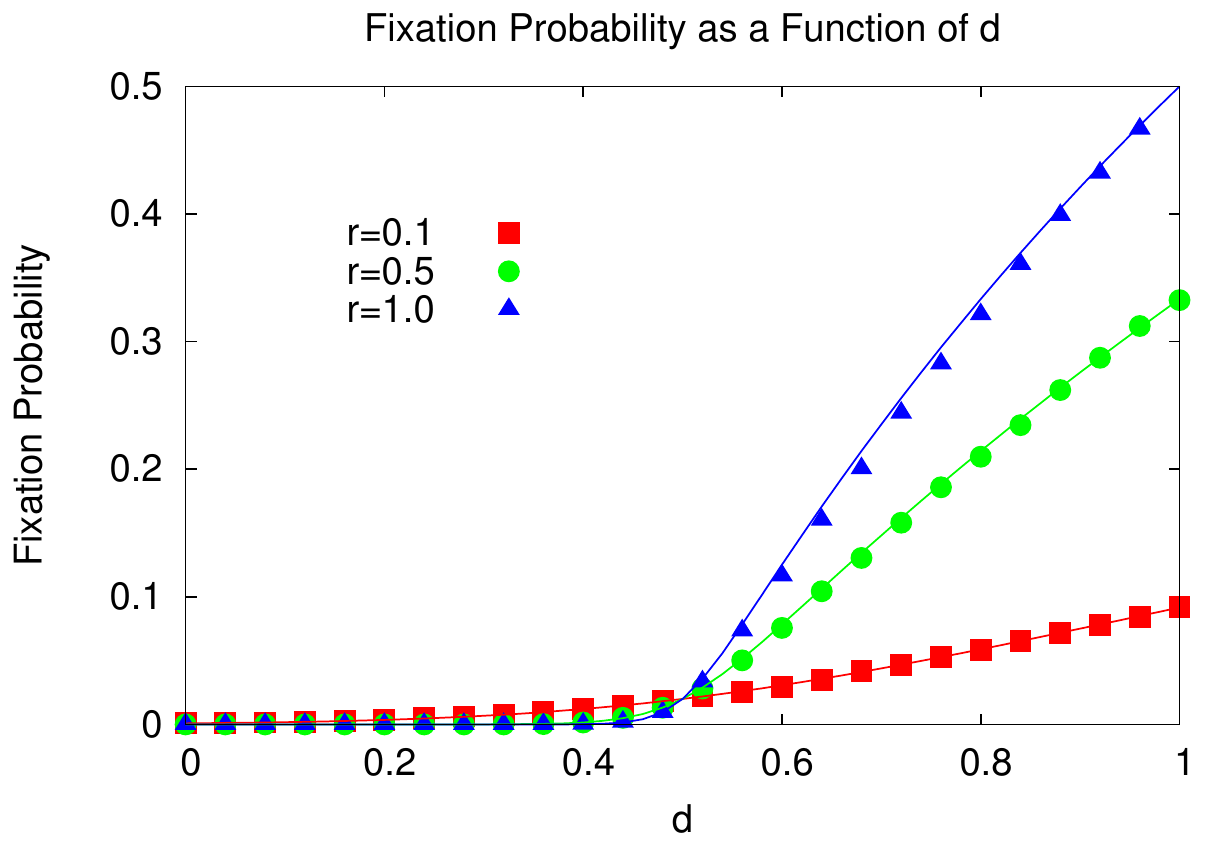}
 \caption{The fixation probability of a single red type on a complete graph with a fraction $d$ of RED sites. The points were generated with a simulation with results averaged over $10^6$ iterations. The solid curves were generated with Equation \ref{eq:fixprob}. In each of the three cases, a population of size $N=50$ was used. This choice of $N$ was arbitrary and we note that Equation \ref{eq:fixprob} was in good agreement with the simulation for various values of $N$ and $r$.}
 \label{fig:approx}
 \end{figure}

This approximation establishes that $\rho_{R|m}$ behaves as expected: the fixation probability of a set of $m$ red individuals increases as the density $d$ of RED sites increases, or as $m$ increases, for any fixed $r>1$. If less than half of the sites are RED, the fixation probability decreases in $r$ and if the density is greater than $1/2$ the fixation probability increases in $r$. 

A particular case of Equation \ref{eq:fixprob} of interest is for a single $R$ type. In this case, $m=1$ and Equation (\ref{eq:fixprob}) is
\begin{eqnarray}
\label{eq:fixprob1}
\rho_R = \dfrac{1-\left( \dfrac{r(1-d)+d)}{(1-d)+rd}\right ) }{1-\left(\dfrac{r(1-d)+d)}{(1-d)+rd}\right)^N}.
\end{eqnarray}
Since the derivative of Equation (\ref{eq:fixprob1}) with respect to $d$ is positive, it is an increasing function of $d$. Also, since the maximum value of Equation (\ref{eq:fixprob1}) is attained at $d=1$, then Equation (\ref{eq:fixprob1}) is strictly less than Equation (\ref{eq:fixprob1}) for all $0\leq d < 1$. That is, the fixation probability of a single $R$ type is lower if it is advantageous only on a proportion $d<1$ of sites than what it would be if it were advantageous everywhere, $d=1$.

It is worth noting that
\begin{eqnarray}
\lim_{d\to \frac{1}{2}} \rho_R = \frac{1}{N},
\end{eqnarray}
as is expected. That is, if half of the vertices of $G$ are RED and half are BLUE, then the rare red type fixes in the population as it would in a neutral population. However, this observation is valid only when considering the average fixation probability over all vertices. Where the rare type emerges may have a bearing on its fixation probability. 

\begin{figure}
 \centering
 \includegraphics[width=0.25\textwidth]{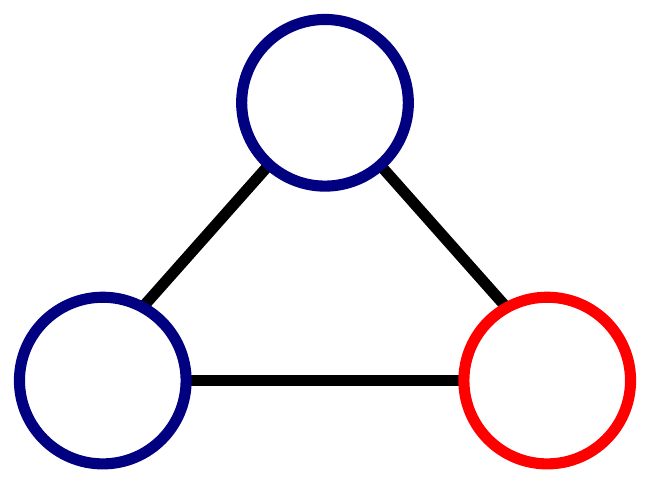}
 \caption{The $3$-cycle with one RED and two BLUE vertices.}
 \label{fig:3cycle}
 \end{figure}

As an example, consider a cycle graph on three vertices. Colour two of the vertices BLUE and the other RED, as in Figure \ref{fig:3cycle}, and suppose the population is undergoing a birth-death process. If a red individual appears on one of the BLUE vertices, it has fixation probability
\begin{eqnarray}
 \rho_{R|B} = \frac{3r^2+7r+5}{7r^2+19r+19}. 
\label{eq:cyclebluefixprob}
\end{eqnarray}
If it appears on the RED vertex, its fixation probability is
\begin{eqnarray}
 \rho_{R|R} = \frac{r(7r+8)}{7r^2+19r+19}.
\label{eq:cycleredfixprob}
\end{eqnarray}
A quick comparison of Equations (\ref{eq:cyclebluefixprob}) and (\ref{eq:cycleredfixprob}) indicates $\rho_{R|R}>\rho_{R|B}$ for all $r>1$. Hence, the fixation probability $\rho_R$, in general, depends on the starting location of $R$. 

This leads to an interesting question: for what graph colourings is $\rho_R$ independent of the starting position of the single red individual? This is answered in the next section. 

\section{Properly Two-coloured Graphs}

In this section we focus on a specific type of colouring of a environmental evolutionary graph: \emph{proper two-colourings}. We will suppose that the edges carry the uniform weighting: $w_{ij} = 1/d_i$ for all $i$ and $j$ adjacent vertices.
\begin{definition}
A properly two-coloured graph is one with no two adjacent vertices coloured the same.
\end{definition}
The lattice in Figure \ref{fig:latticeex} is an example of a properly two-coloured graph. Figure \ref{fig:twocoloured} provides two more examples. A properly two-coloured graph can have any number of vertices of any number of degrees provided that the vertices of the graph are coloured so that no two vertices of the same colour are adjacent. The class of graphs that can be properly two-coloured are known as the \emph{bipartite} graphs. Such graphs are an active topic of study; see \cite{diestel10} for a thorough introduction to bipartite graphs.

\begin{figure}

\centering
  \subfigure[]{\label{fig:starex}\includegraphics[ width=0.35\textwidth]{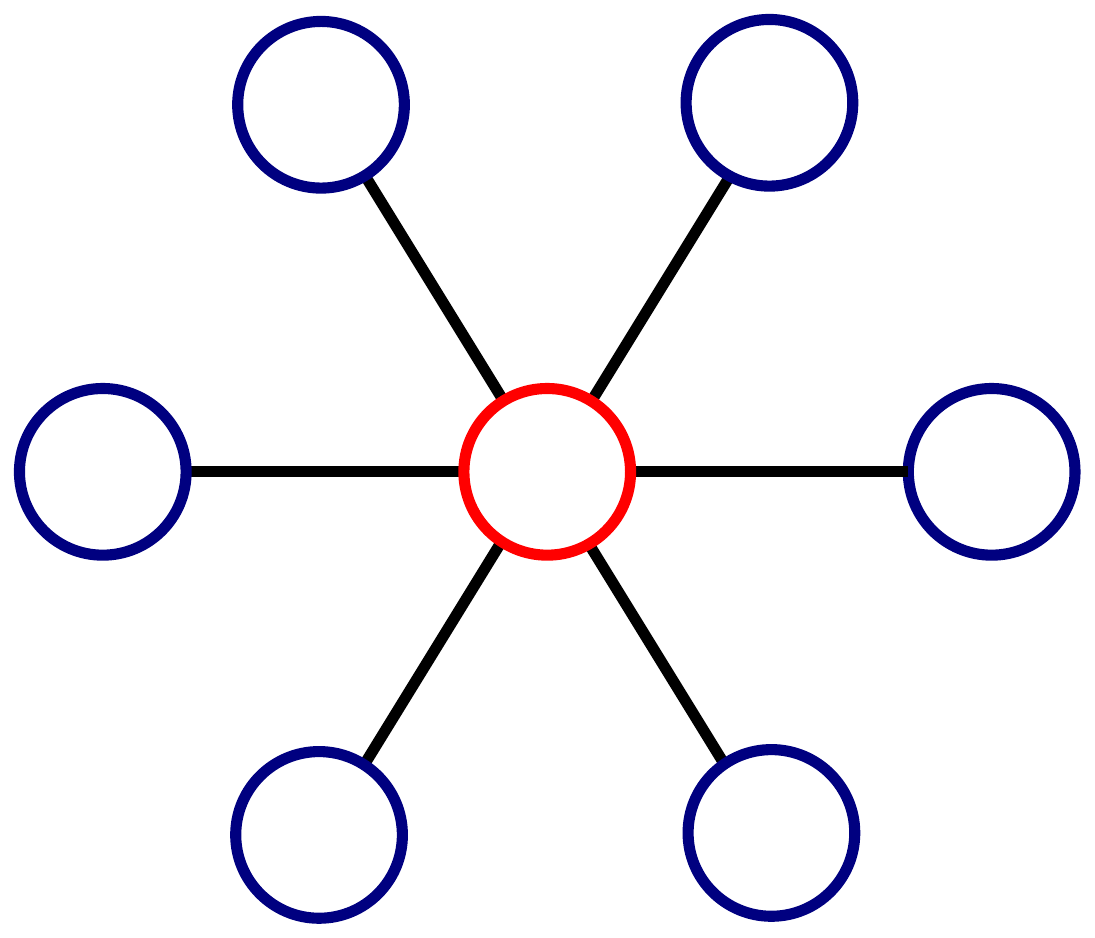}}
 \hspace{1cm}
  \subfigure[]{\label{fig:6cycle}\includegraphics[width=0.35 \textwidth]{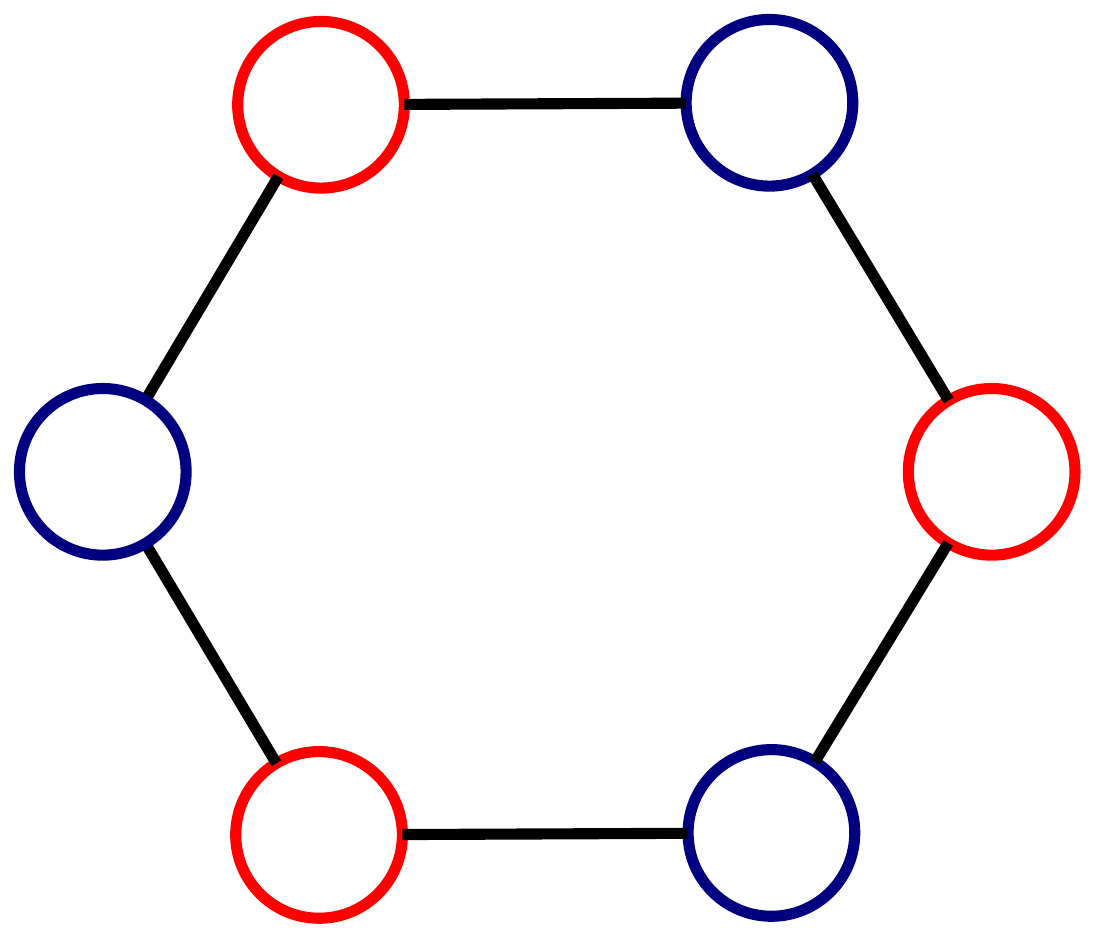}}
\caption{Two examples of properly two-coloured graphs. }
\label{fig:twocoloured}
\end{figure}

Properly two-coloured graphs exhibit a fascinating property: the birth-death evolutionary process on such graphs does not depend on the parameter $r$. More precisely, the fixation probability of a set of red types on a properly two-coloured graph is equal to the corresponding neutral fixation probability. Recall that a neutral process is one which $r=1$.  In the context of environmental evolutionary graphs, the population is neutral if both red and blue types have fecundity $1$, irrespective of the vertices they occupy. We have the following.
\begin{theorem}
Given a properly two-coloured graph $G$ undergoing a birth-death process and a set $M\subset V(G)$ of vertices occupied by $R$ (red) types, then the probability $\rho_{R|M}$ that the $R$ fix in the population is
\begin{eqnarray}
\label{eq:twocoloured}
\rho_{R|M} = \sum_{i \in M} \rho_{neutral|i},
\end{eqnarray}
where $\rho_{neutral|i}$ is the neutral fixation probability of an individual starting at vertex $v_i$. 
\label{thm:twocoloured}
\end{theorem}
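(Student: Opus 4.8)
The plan is to exhibit, for the birth-death process, a martingale equal to a weighted count of the red individuals, and then to apply the optional stopping theorem. To each vertex $v_i$ assign the weight $\pi_i = 1/d_i$, where $d_i$ is the degree of $v_i$, and follow the quantity $\Phi = \sum_{i : x_i = R} \pi_i$ along the trajectory of the process. I will show that $\Phi$ is a martingale for every value of $r \geq 1$ simultaneously; optional stopping then gives $\rho_{R|M} = \left(\sum_{i \in M} 1/d_i\right)/\left(\sum_k 1/d_k\right)$, an expression that both fails to involve $r$ and distributes additively over $M$. Specializing to $r = 1$ and $M = \{v_i\}$ identifies the neutral single-vertex probability $\rho_{neutral|i} = (1/d_i)/(\sum_k 1/d_k)$, after which \eqref{eq:twocoloured} follows by summation.

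The crux is a structural observation about the only edges that can change $\Phi$ in a single step. Call an edge $\{v_i,v_j\}$ \emph{active} in a given state if its endpoints carry different foreground colours, $x_i \neq x_j$. I claim the two individuals on an active edge always have \emph{equal} fecundity. Indeed, because $G$ is properly two-coloured the background colours satisfy $c_i \neq c_j$. Suppose $x_i = R$ and $x_j = B$ (the reverse case is symmetric). If $c_i = R$ then $c_j = B$, so the red individual at $v_i$ and the blue individual at $v_j$ both match their backgrounds and $f_i = f_j = r$; if instead $c_i = B$ then $c_j = R$ and both mismatch, giving $f_i = f_j = 1$. Either way $f_i = f_j$ on every active edge, which is exactly where the proper two-colouring does its work.

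With this in hand I would compute the one-step drift of $\Phi$ by summing over vertices and then regrouping the two orientations of each edge. Writing $F = \sum_k f_k$ for the state-dependent total fecundity, the contribution of an edge $\{v_i,v_j\}$ to $\mathbb{E}[\Delta\Phi]$ is the common factor $1/F$ times
\[
\big(\mathbb{1}[x_i=R]-\mathbb{1}[x_j=R]\big)\left(\frac{f_i \pi_j}{d_i} - \frac{f_j \pi_i}{d_j}\right).
\]
On an inactive edge the first factor vanishes. On an active edge the structural observation gives $f_i = f_j$, so, since $\pi_i = 1/d_i$, the second factor becomes $f_i/(d_i d_j) - f_j/(d_i d_j) = 0$. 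Thus every edge contributes zero and $\mathbb{E}[\Delta\Phi \mid \text{current state}] = 0$ irrespective of the value of $F$ or of $r$, so $\Phi$ is a martingale.

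Finally I would invoke optional stopping at the absorption time $T$: the process is a finite Markov chain whose only absorbing states are $S_R$ and $S_B$, so $T < \infty$ almost surely and the bounded martingale $\Phi$ satisfies $\mathbb{E}[\Phi_T] = \mathbb{E}[\Phi_0]$. Since $\Phi = \sum_k \pi_k$ on $S_R$ and $\Phi = 0$ on $S_B$, this yields $\rho_{R|M} = \mathbb{E}[\Phi_0]/\sum_k \pi_k = \left(\sum_{i\in M} 1/d_i\right)/\left(\sum_k 1/d_k\right)$, and the identification of $\rho_{neutral|i}$ above completes the argument. I expect the main obstacle to be the drift bookkeeping — correctly pairing the two orientations of each edge and confirming that the state-dependent normaliser $F$ factors out harmlessly — together with checking that the equal-fecundity claim genuinely covers every active edge; the proper two-colouring is precisely the hypothesis that forces the fitness terms to cancel there, which is why the same weights $\pi_i = 1/d_i$ serve both the neutral and the $r>1$ dynamics.
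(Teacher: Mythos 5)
Your proof is correct and is, at its core, the same argument as the paper's: the identical weights $\pi_i = 1/d_i$, the identical key lemma (on a properly two-coloured graph, the two occupants of any edge whose foreground colours differ have equal fecundity --- the paper's appendix Lemma on $f_i(S)=f_j(S)$ for $j \in \mathcal{N}_S'(v_i)$), and the identical cancellation obtained by pairing the two orientations of each edge. The only divergence is the closing step --- you apply optional stopping to the bounded martingale $\Phi$ at the almost-sure absorption time, whereas the paper verifies that the same normalized quantity satisfies $(I-\transmatrix)\rho = 0$ and invokes its nullity-$2$ uniqueness lemmas --- a standard repackaging of one and the same harmonicity computation, with the incidental merit that your version derives the neutral identification $\rho_{neutral|i} = (1/d_i)/\sum_{j}(1/d_j)$ as the $r=1$ special case instead of citing it.
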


\begin{proof}
The proof of this theorem requires some technical results and is left to the appendix.
\end{proof}

Equation (\ref{eq:twocoloured}) has a convenient form in terms of \emph{reproductive value}. Recall that the reproductive value of an individual $i$ is the (relative) probability that a member of the population at some time in the distant future is identical by descent to $i$ \cite{fisher30, taylor90, grafen06}. In \cite{maciejewski13} the reproductive value $V_i$ was calculated for any vertex $v_i$ in any evolutionary graph undergoing either a birth-death or death-birth process. It was found that $V_i = d_i$ for the death-birth process and $V_i = 1/d_i$ for the birth-death process. Moreover, the author of \cite{maciejewski13} shows that the neutral fixation probability of a mutant starting on vertex $v_i$ is equal to $i$'s relative reproductive value:
\begin{eqnarray}
\rho_{neutral|i} = \dfrac{V_i}{\sum_{j \in V(G)}V_j}.
\end{eqnarray}
Hence, in terms of vertex degrees, Equation (\ref{eq:twocoloured}) reads
\begin{eqnarray}
\rho_{R|M} = \dfrac{\displaystyle \sum_{i \in M} \dfrac{1}{d_i}}{\displaystyle \sum_{j \in V(G)} \dfrac{1}{d_j}},
\label{eq:rv}
\end{eqnarray}
which is exactly what is expected given the results of \cite{maciejewski13}. It is worth emphasizing that Equation (\ref{eq:twocoloured}) does not depend on $r$, a fact made transparent by Equation (\ref{eq:rv}).

It is interesting to note that Theorem \ref{thm:twocoloured} does not hold in general for the death-birth process. In fact, counter-examples are easy to come by. Take, for example, the section of a line graph, as in Figure \ref{fig:counterexample}. The graph is properly two-coloured yet the fixation probability is not independent of $r$. 

%This example is the expected result in the death-birth process, as the following theorem establishes. 

%\begin{theorem}
%Given an environmental evolutionary graph $G$ with more than $2$ vertices undergoing a death-birth process, there does not exist a colouring of the vertices of $G$ such that the fixation %probability of any set of red individuals is independent of $r$. 
%\label{thm:deathbirth}
%\end{theorem}
%
%\begin{proof}
%Again, the proof is located in the appendix. 
%\end{proof}

\begin{figure}
\centering
\includegraphics[width=0.6\textwidth]{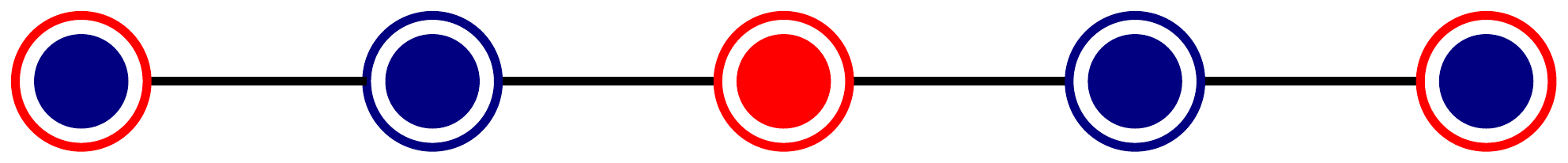}
\caption{An example of a properly two-coloured graph on which the fixation probability of a red type in the death-birth process depends on $r$.}
\label{fig:counterexample}
\end{figure}

We conjecture that there is no class of environmental evolutionary graphs on which the fixation probability in the death-birth process is independent of $r$. This is perhaps surprising since the author of \cite{maciejewski13} was able to show that for \emph{regular}---meaning all vertices have the same degree---evolutionary (non-environmental) graphs undergoing a death-birth process, and for a set $M$ of $R$ types, 
\begin{eqnarray}
\rho_{R|M} = \dfrac{\displaystyle \sum_{i \in M} d_i}{\displaystyle \sum_{j \in V(G)} d_j}.
\end{eqnarray}
Such a result suggests an extension to environmental evolutionary graphs as was the case for the birth-death process. The reason why the results of \cite{maciejewski13} can be extended to environmental evolutionary graphs for the birth-death process and not for the death-birth process is the scale of information about the population required by the two processes. The death-birth process requires very \emph{local} information about the population state, namely, the state of the neighbours of an individual chosen to die. The birth-death process requires \emph{global} information about the population; it requires the state of all individuals in the population. This global property allows for a complete classification of graphs on which the fixation probability in the birth-death process is independent of $r$. The local nature of the death-birth process imposes different local conditions for the independence of the fixation probability on $r$, which may conflict and not scale to the entire population.

\subsection{On the Starting Location}

The structure of the underlying graph $G$ and its associated colouring can affect the result of the model in sometimes counterintuitive ways. As an example, it is natural to assume that the local fitness advantage conferred by matching the background colour of a vertex necessarily translates into a global fitness advantage for a lone mutant starting at that vertex. This notion is formalized in the following conjecture:
\begin{conjecture}
  Let $G$ be an environmental evolutionary graph and let $v,w$ be vertices in $G$. Let $\rho_{R|v}$ denote the probability that red achieves fixation starting from the state where $v$ is the only red individual, and likewise for $\rho_{R|w}$. If $v$ is coloured RED and $w$ BLUE, then $\rho_{R|v} \geq \rho_{R|w}$.
\end{conjecture}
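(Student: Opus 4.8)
The intuition behind the conjecture is a monotonicity principle: raising the fecundity of a red occupant at its home site ought only to help red, so the red type should fare at least as well when its founding vertex is RED. The plan I would pursue is to make this precise through a coupling, while separating the two effects that the statement confounds: the \emph{colour} advantage at the starting vertex and the \emph{structural} position of that vertex in $G$.

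First I would isolate the colour effect by fixing the structure. For a single vertex $u$, compare the two environmental graphs that agree everywhere except that $u$ is coloured RED in one and BLUE in the other, and ask whether recolouring $u$ to RED increases $\rho_{R|u}$. Here a monotone coupling is plausible, since changing $u$'s background from BLUE to RED only ever raises the fecundity of a red occupant of $u$ and lowers that of a blue occupant; one would try to build a coupling of the two birth-death chains under which the RED-background chain dominates in the coordinatewise partial order $B \preceq R$ on states. The first technical hurdle is that birth-death reproduction probabilities are normalised globally by $\sum_k f_k$, so a single local change to a background colour perturbs every transition probability at once, and establishing a genuinely monotone coupling (or, alternatively, comparing the two linear systems satisfied by the fixation probabilities) is where the real work lies.

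The decisive obstacle, however, is the step comparing \emph{different} vertices $v$ and $w$, where structural position re-enters and can swamp the colour effect; this is exactly the case I would test first, and for the birth-death process it already refutes the conjecture. By Theorem~\ref{thm:twocoloured} (in the form~(\ref{eq:rv})), on a properly two-coloured graph the fixation probability is independent of $r$ and equals $\rho_{R|v} = \dfrac{1/d_v}{\sum_{j} 1/d_j}$. On the star $K_{1,n}$, properly two-coloured with a RED centre and BLUE leaves, this gives $\rho_{R\mid\mathrm{centre}} = \tfrac{1}{1+n^2}$ against $\rho_{R\mid\mathrm{leaf}} = \tfrac{n}{1+n^2}$, so for $n \ge 2$ the RED centre fixes \emph{less} readily than a BLUE leaf: the colour advantage is overwhelmed by the reproductive-value inversion $V_i = 1/d_i$, which penalises high-degree vertices under the birth-death rule. (Under death-birth one instead has $V_i = d_i$, so on the star the structural and colour advantages of the centre align and no such contradiction arises; the counterexample is specific to birth-death.) The plan should therefore retreat to a hypothesis that neutralises structure, most naturally the \emph{regular} graphs, where all $V_i$ coincide, the star obstruction vanishes, the $3$-cycle computation of Equations~(\ref{eq:cyclebluefixprob})--(\ref{eq:cycleredfixprob}) is consistent with the inequality, and the coupling of the first step has a genuine chance of extending to a comparison across equal-degree vertices.
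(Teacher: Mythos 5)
You are right that the conjecture is false, and your star counterexample is sound: on the properly two-coloured star $K_{1,n}$ with RED centre and BLUE leaves, Theorem~\ref{thm:twocoloured} in the form of Equation~(\ref{eq:rv}) gives $\rho_{R|\mathrm{centre}} = \tfrac{1}{1+n^2} < \tfrac{n}{1+n^2} = \rho_{R|\mathrm{leaf}}$ for $n \geq 2$, so a red arising on the RED centre does strictly worse than one arising on a BLUE leaf. This is the same verdict as the paper's but reached by a genuinely different route. The paper's counterexample is a weighted graph on six vertices (two uniformly weighted triangles joined by all cross edges of small weight $\alpha$), with one RED and two BLUE vertices in one triangle and the colour-reversed pattern in the other; it argues heuristically that as $\alpha \to 0$ each triangle fixates internally before any cross edge fires, that a red arising on the mismatched BLUE vertex $w$ of the RED-majority triangle fixates locally with probability near $1/2$, while a red arising on the matched RED vertex $v$ has local fixation probability bounded away from $1/2$, and then confirms numerically that $\rho_{R|v} \approx 0.13 < 0.2 \approx \rho_{R|w}$. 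Each approach buys something distinct. Yours is exact and needs no numerics (it is rigorous modulo Theorem~\ref{thm:twocoloured}), but it refutes the conjecture for a purely structural reason---the birth-death reproductive value $V_i = 1/d_i$ penalising the high-degree centre---an effect present already in the neutral process and having nothing to do with the environmental colouring. The paper deliberately constructs its example to exclude exactly that explanation: its graph is vertex-transitive (even accounting for edge weights) and its colouring is symmetric between red and blue, so the violation there is driven entirely by the dynamics, namely an initial fecundity disadvantage offset by a subsequent advantage once the mutant lineage reaches matching vertices.

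There is, however, a genuine flaw in your closing suggestion. You propose retreating to regular graphs, where all reproductive values coincide, as a hypothesis under which the inequality might survive. The paper's own counterexample refutes this salvage: its two-triangle graph is regular and indeed vertex-transitive, so equalising the $V_i$ does not rescue the conjecture. On regular \emph{properly two-coloured} graphs Theorem~\ref{thm:twocoloured} forces $\rho_{R|v} = \rho_{R|w} = 1/N$ for all $v, w$, which is consistent with the weak inequality but only vacuously so; outside that class the dynamical mechanism alone produces a strict violation even with every structural asymmetry removed. Any correct restriction of the conjecture would therefore have to control the colouring pattern around the starting vertex, not merely the degree sequence.
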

This natural conjecture turns out to be false, as it fails to take into account the global structural characteristics of the graph. Indeed, we present a counterexample in which the underlying graph $G$ is symmetric and is equally suited for red and blue, yet a red that emerges on a certain BLUE vertex experiences a fixation probability greater than if it had emerged on a corresponding RED vertex. This example illustrates that an initial fitness disadvantage can be offset by a subsequent fitness advantage. 

Our counterexample is a weighted graph and therefore uses the weighted version of the model, so that if a vertex $u$ is selected to reproduce, then the probability that its neighbor $v$ is selected to die is proportional to the weight of the edge $e_{uv}$. The graph consists of two triangular clusters $T_1$ and $T_2$ whose edges are uniformly weighted. All possible edges between $T_1$ and $T_2$ are included with weight $\alpha$, where $\alpha$ is a small constant to be determined later. In $T_1$ there are two BLUE vertices and one RED vertex, while in $T_2$ there are two RED vertices and one BLUE vertex. The graph and its colouring are illustrated in Figure \ref{fig:islandex}.
\begin{figure}
\centering
\includegraphics[width=0.6\textwidth]{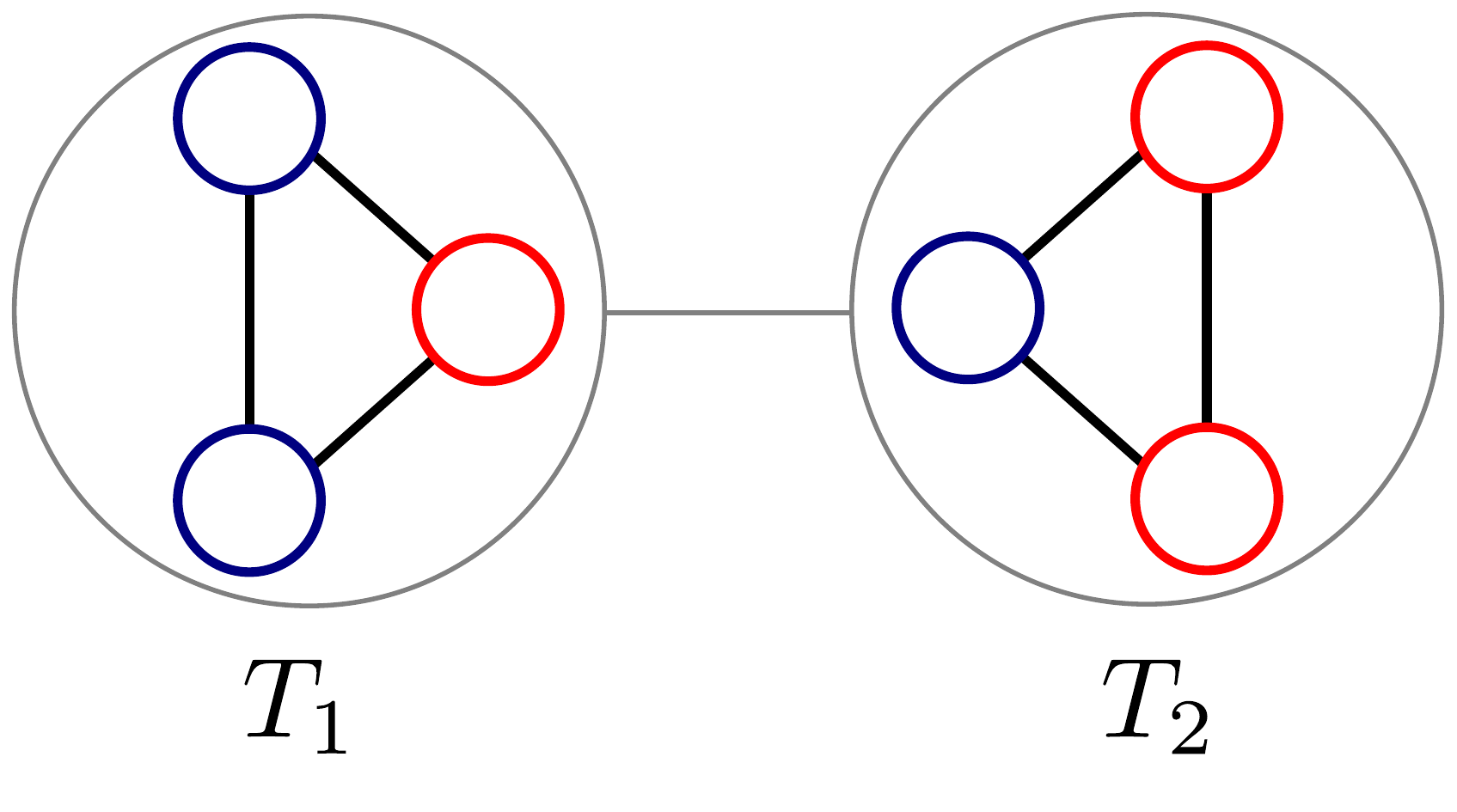}
\caption{By emerging on a BLUE vertex, a red may experience a fixation probability greater than if it had emerged on a RED vertex even though both red and blue are equally well-suited to the environment.}
\label{fig:islandex}
\end{figure}
Clearly $G$ is symmetric, even when the edge-weights are taken into account.  Assume that $r$ is much greater than $1$. Let $v$ denote the unique RED vertex in $T_1$ and let $w$ denote the unique BLUE vertex in $T_2$. We will argue that $\rho_{R|w} > \rho_{R|v}$, i.e., a single red mutant has a better chance of achieving fixation if it arises on the BLUE vertex $w$ than it does if it arises on the RED vertex $v$. We first give a heuristic, non-rigorous argument that nevertheless expresses \emph{why} this result should be the case. Then, we give specific parameter values and obtain the relevant fixation probabilities numerically, which further establishes the result.

Heuristically, the effect of taking $\alpha$ ``sufficiently small'' is that the edges joining $T_1$ and $T_2$ are almost never selected. Thus, given any initial state, almost surely the triangles $T_1$ and $T_2$ will fixate on a particular population colour before any cross-edge is selected for reproduction. If $T_1$ and $T_2$ fixate on the same colour, then $G$ has achieved fixation. If $T_1$ and $T_2$ fixate on the opposite colours, then by obvious symmetry the probability that $G$ fixates on red is $1/2$. It therefore suffices to consider the probability that $T_1$ fixates on red when a red mutant arises on $v$, which we denote $\rho_1$, as well as the corresponding probability for $T_2$ and $w$, which we denote $\rho_2$.

First we estimate $\rho_1$. Observe that if the red mutant arises on $v$, then initially all individuals in $T_1$ match their vertex colour and therefore have fitness $r$; thus, on the first step all individuals are equally likely to be chosen. With probability $1/3$ nothing changes (a blue individual is chosen but replaces the other blue individual), with probability $1/3$ the red mutant is immediately replaced with blue, and with probability $1/3$ the red mutant is chosen to reproduce, replacing a blue individual. Thus, conditioning on the event that the state of the graph changes, with probability $1/2$ we end up with two red vertices. Since there is still non-negligible probability that the remaining blue individual will be chosen to reproduce and overwrite the red individual on $v$, this implies that $\rho_1 < 1/2$. Furthermore, as $\alpha \to 0$ and $r \to \infty$, the probability $\rho_1$ will be bounded away from $1/2$.

Next we estimate $\rho_2$. If the red mutant arises on $w$, then the opposite situation reigns: all individuals in $T_2$ initially fail to match their vertex colour and have fecundity $1$. Thus, all individuals are equally likely to be chosen, but as soon as an individual manages to occupy a vertex of the correct colour, the process will almost surely fixate on that colour. This implies that $\rho_2 \approx 1/2$ since, as in the earlier analysis, the probability that $w$ reproduces, conditional on a change in state, is about $1/2$. We heuristically conclude that $\rho_1 < \rho_2$.

Next we examine the situation numerically to confirm our heuristic analysis. Using the graph described above, let $r = 1/4$ and let $\alpha = 1/100$. Using standard techniques of Markov chain theory, we numerically compute that $\rho_{R|v} \approx 0.13$ and $\rho_{R|w} \approx 0.2$, so that $\rho_{R|w} > \rho_{R|v}$ as desired.

\section{More Than Two Background Colours}

We now introduce a third colour, green, for both vertices and individuals. We retain the previous notion of fecundity, that if an individual's colour matches that of the vertex then its fecundity is $f=r>1$ and $f=1$ otherwise. For ease of presentation, attention is limited to the birth-death process for this entire section. 

Introducing a third colour can never increase the average fixation probability of any single colour of individuals. However, and quite interestingly, a third colour can decrease the time to fixation of a mutant individual. 

As was seen in Section 2 for well-mixed populations, via Equation \ref{thm:fixprob}, lowering the density of sites of a certain colour decreases the average fixation probability of the individuals of that colour. This can also seen to be true for graph-structured populations undergoing either the birth-death or death-birth process: any non-RED vertex will eventually be occupied by a red individual and this individual is less fit than it would have been if the vertex were RED. Such an effect also occurs on graphs with more than two vertex colours.  

As an example, consider the line graph on three vertices, coloured as in Figure \ref{fig:onered}, and suppose the population is undergoing a birth-death process. Suppose further that the population initially consists entirely of either of blue or green until a mutation occurs, producing a red. This mutant appears on the hub vertex with probability $(1+r)/(2+r)$ and on one of the leaf vertices with probability $1/(2+r)$. A simple calculation of the average fixation probability yields
\begin{eqnarray}
\overline{\rho}_R = \frac{2r^2(r+1)}{2r^3+5r^2+4r+4}.
\label{eq:onered}
\end{eqnarray}
This is seen to be less than the corresponding fixation probability in an all RED environment,
\begin{eqnarray}
\overline{\rho}_R = \left(\frac{1}{5}\right)\frac{12r^2+7r+1}{6r^2+7r+2}.
\label{eq:allred}
\end{eqnarray}
which is illustrated in Figure \ref{fig:fixprobcomp}. 

%It is often assumed in the literature that mutations are rare; that the evolutionary process is much quicker than the mutation process. This means that a population is almost always found in a ``pure'' state, comprised of only one type of individual. Assuming rare mutations would allow us to average Equations (\ref{eq:onered}) and (\ref{eq:oneredtwogreen}), using the probability of mutations as weights. Rather than delving into mutation rates at this stage, we merely note that Equation (\ref{eq:oneredtwogreen}) is less than Equation (\ref{eq:onered}) for all $r$, so that, on average, the introduction of a third colour decreases the fixation probability of a red type below what it would have been in a red/blue environment. 

%The average fixation probability of a mutant red in a 3-line graph with all red vertices is
%\begin{eqnarray}
%\overline{\rho_A} = \left(\frac{4}{3}\right) \frac{r(1+2r)}{3r^2+8r+4}.
%\label{eq:threered}
%\end{eqnarray}
%One can show that $\overline{\rho_A^*}>\overline{\rho_A}$ for all values of $r>1$; see Figure \ref{fig:r} for the range $1\leq r \leq 2$. A similar observation can be made for the death-birth process on the $3$-line with one leaf vertex coloured red and the other two vertices blue. 

\begin{figure}[h]
\centering
  \subfigure[]{\label{fig:onered}\includegraphics[ width=0.05\textwidth]{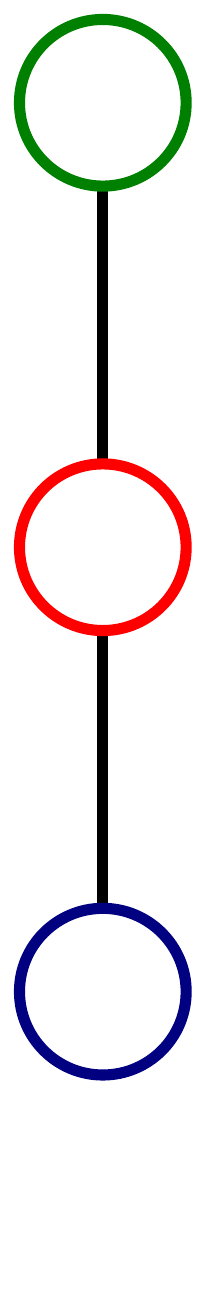}}
 \hspace{2cm}
 \label{fig:fixprobcomp}
  \subfigure[]{\label{fig:r}\includegraphics[width=0.65\textwidth]{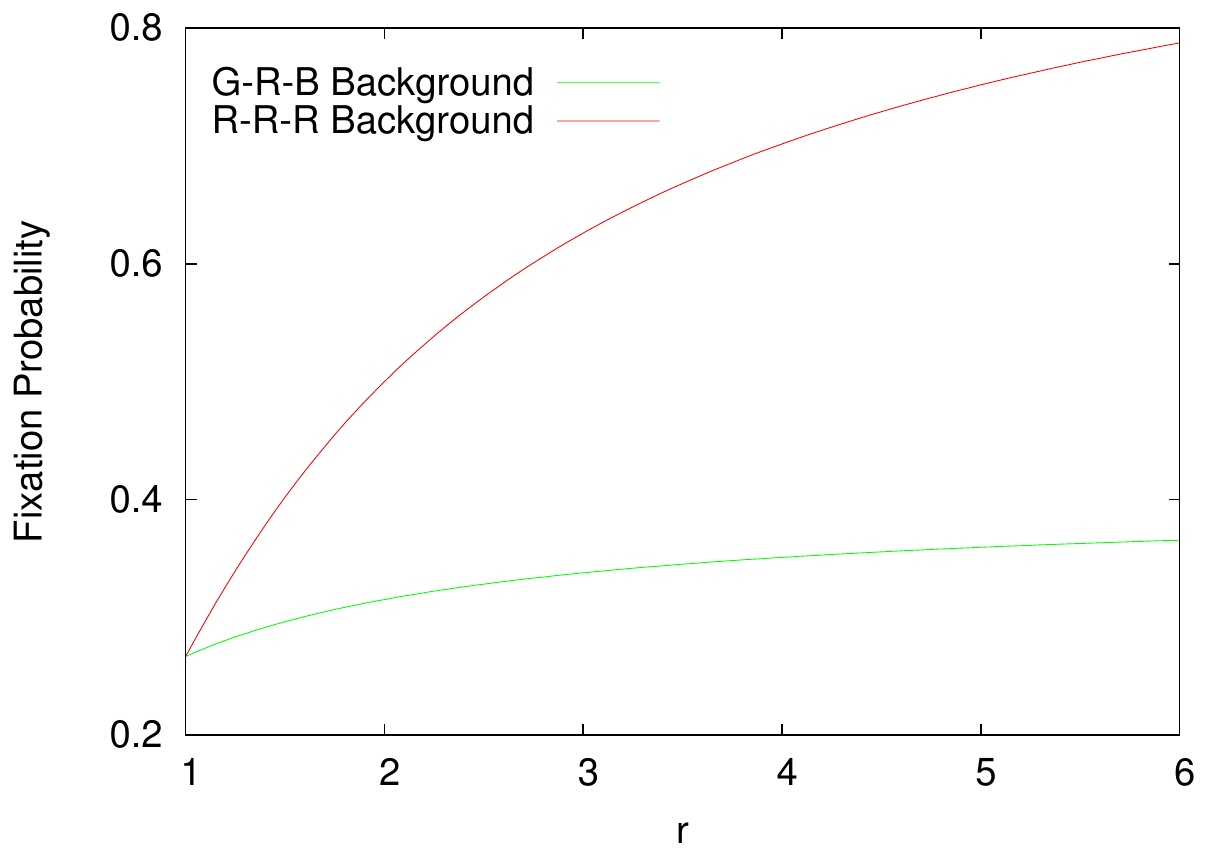}}
\caption{A third colour may decrease the fixation probability of a red mutant. (a) The line graph on three vertices with the centre coloured RED and the leaves GREEN and BLUE. (b) The introduction of the third vertex colour, GREEN, decreases the fixation probability of red by lowering the proportion of RED sites. The solid line corresponds to the average fixation probability of a red mutant on a $3$-line graph consisting entirely of RED vertices. The dotted line is the average fixation probability of a red mutant on a $3$-line graph with hub coloured RED, as in Figure \ref{fig:onered}. Both curves are functions of $r$.} 
\end{figure}

Supposing that the red type does go on to fixation, we may determine the time such an event takes. The expected number of birth-death events needed for the red type to fix in the population is known as the time to fixation \cite{ewens04, antalscheuring06, traulsenhauert09}. 

Figure \ref{fig:fixationtimeline} displays the time to fixation for a mutant red type on both the environmental graph in Figure \ref{fig:onered} and in an all-RED $3$-line. In both cases, the mutant red appears in a population of either all blue or green individuals. The calculations for the times to fixation are in the Appendix. 

The average time to fixation of a single red is seen to be lower in the population depicted in Figure \ref{fig:onered} than that for an all-RED $3$-line for a range of $r$ values. This difference in fixation time is easily explained by considering the population update rule. 

For the birth-death process, those with greater fecundity are chosen more often to reproduce. If the RED site is located on the hub vertex then a mutant red type on this vertex has an advantage over the resident blue (green) type on a GREEN (BLUE) leaf; it will be chosen with greater probability than such a blue (green) type. Once it is chosen for reproduction it places a red offspring on one of the leaves. Suppose it displaces a leaf individual that does not match their vertex colour. This new red offspring has the same fecundity as the individual it replaced. Hence, the red type on the hub maintains its fecundity advantage. If the environment had a RED leaf where the red offspring was placed then this new offspring would also have a fecundity advantage and would be more likely to compete with the red type on the hub for reproduction. This would result in the leaf red displacing its parent more often. This reproductive event is ``wasted'' in a sense, since it did nothing to bring the population closer to an 
all-red state. This type of redundant back-and-forth is reduced if the red type does not experience an increase in fecundity on the leaf vertices.

%This simple example illustrates a more general heuristic: an advantageous mutant decreases its time to fixation in a population by not interfering with copies of itself. 

\begin{figure}[h!]
\centering
\includegraphics[width=0.8\textwidth]{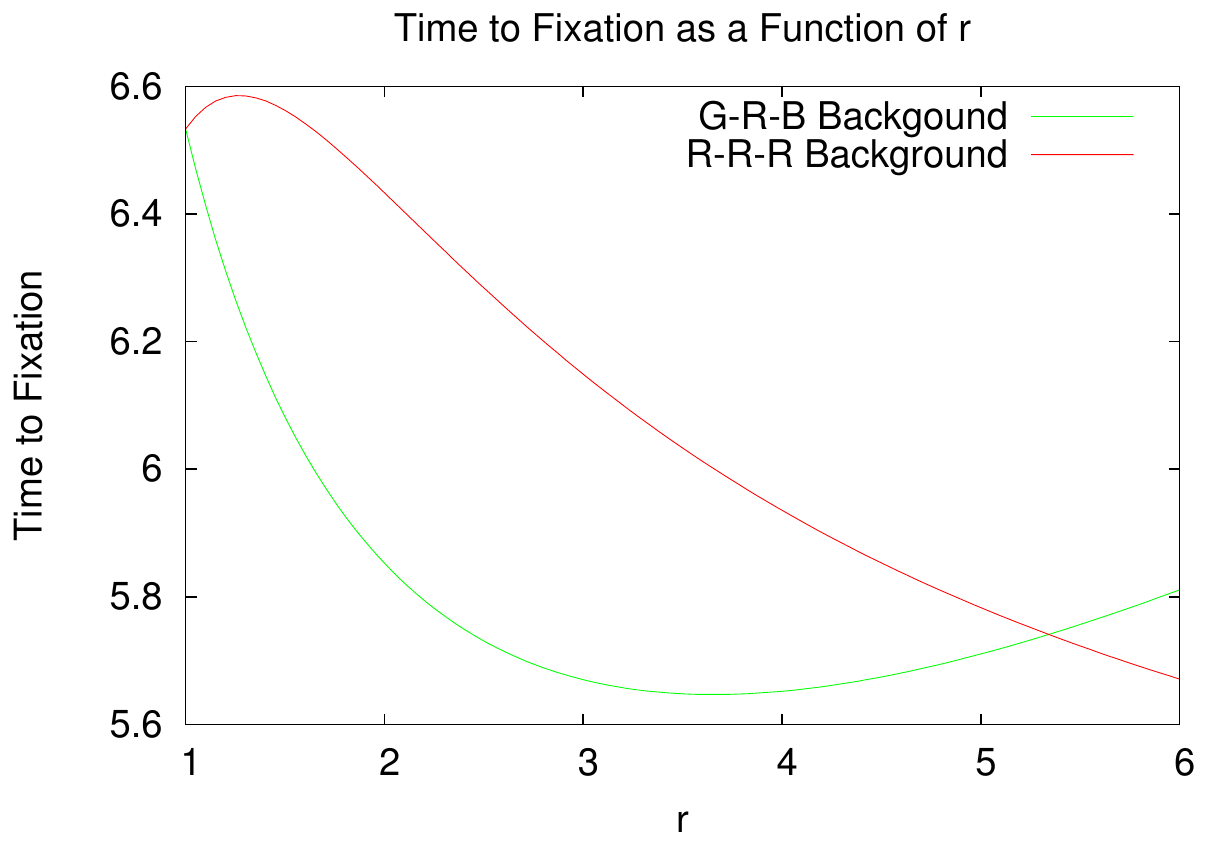}
\caption{Lower densities of RED sites may decrease the time to fixation. The curves correspond to the expected number of steps required for a single red individual to reach fixation in a population otherwise composed of all blue or all green. The dotted line corresponds to the single RED environment in Figure \ref{fig:onered}, denoted $G-R-B$.  The solid line corresponds to the all-RED $3$-line, denoted $R-R-R$. The average time to fixation of a single red in the $G-R-B$ environment is less than than that in an all-RED environment. This is especially pronounced for values of $r$ less than approximately $3.5$. The time to fixation in the $G-R-B$ environment eventually increases and surpasses that of the all-RED environment because the average is taken over all possible starting positions for the red mutant. If this red type emerges on the BLUE vertex then, for $r$ sufficiently large, it will be chosen to reproduce with very low probability. Therefore, the time to fixation gets increasingly large. As $r$ 
increases, so does this fixation probability.}
\label{fig:fixationtimeline}
\end{figure}

This simple example illustrates a more general observation: an advantageous mutant decreases its time to fixation in a population by not interfering with copies of itself. This phenomenon of decreasing time to fixation appears to not be restricted to this toy example. Figure \ref{fig:BAresults} presents the average time to fixation for a red type on three different random graphs generated with a preferential attachment algorithm. These graphs are known to approximate authentic social interaction networks \cite{barabasialbert99}. For each randomly-generated graph, a minimum average time to fixation is obtained for some intermediate proportion of vertices coloured RED. The upshot of these observations is that the introduction of a third colour can decrease the average time to fixation of a mutant type. 

%The graph is initially coloured with no red vertices. An evolutionary process is run repeatedly and each time a single red fixes in the population the fixation time is recorded. This time is averaged over $10,000$ fixation events. 

\begin{figure}[h!]
  \centering
  \subfigure[]{\label{fig:BA}\includegraphics[width=0.3\textwidth]{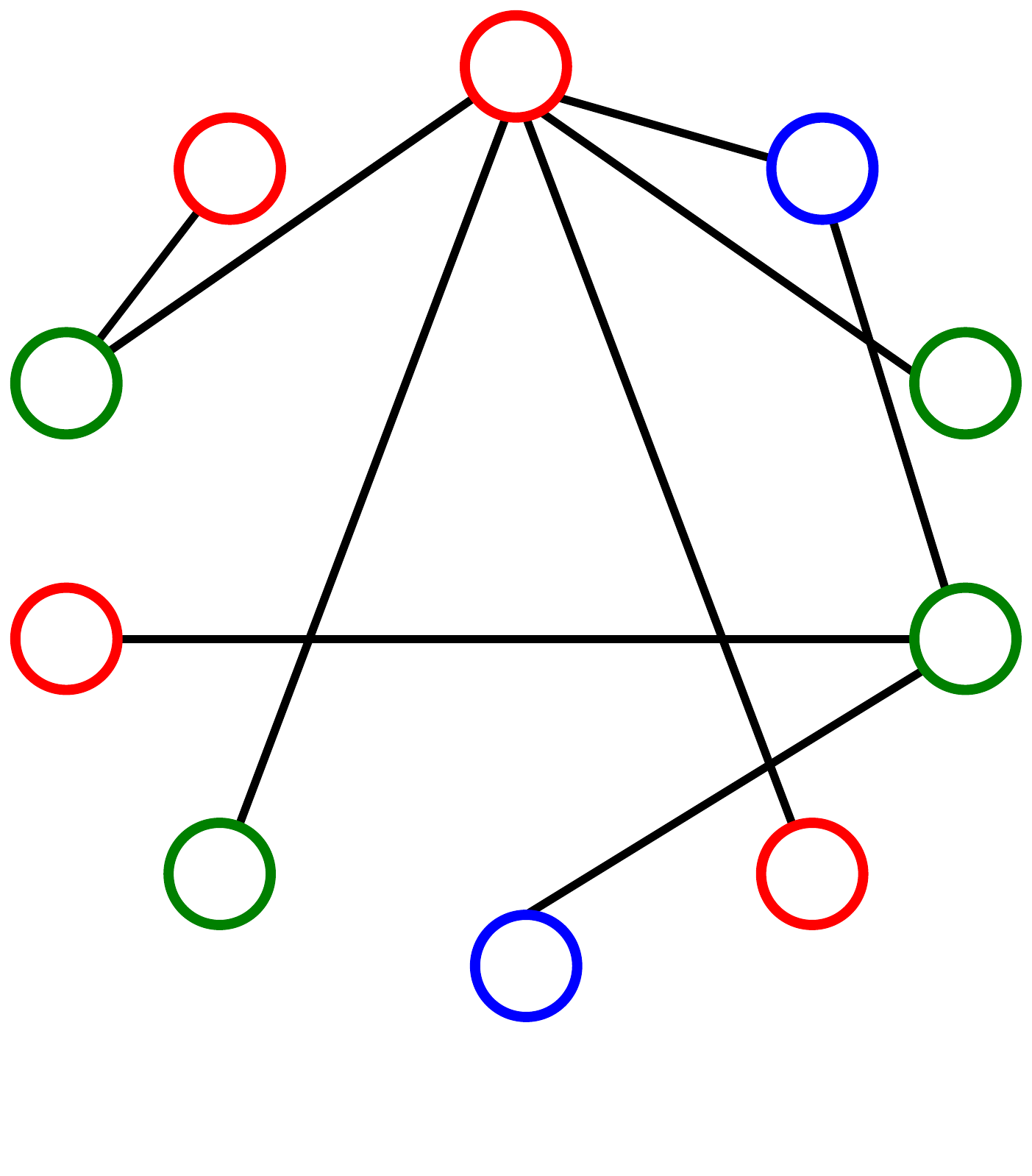}}
\hspace{1cm}
  \subfigure[]{\label{fig:BAtimes}\includegraphics[width=0.6\textwidth]{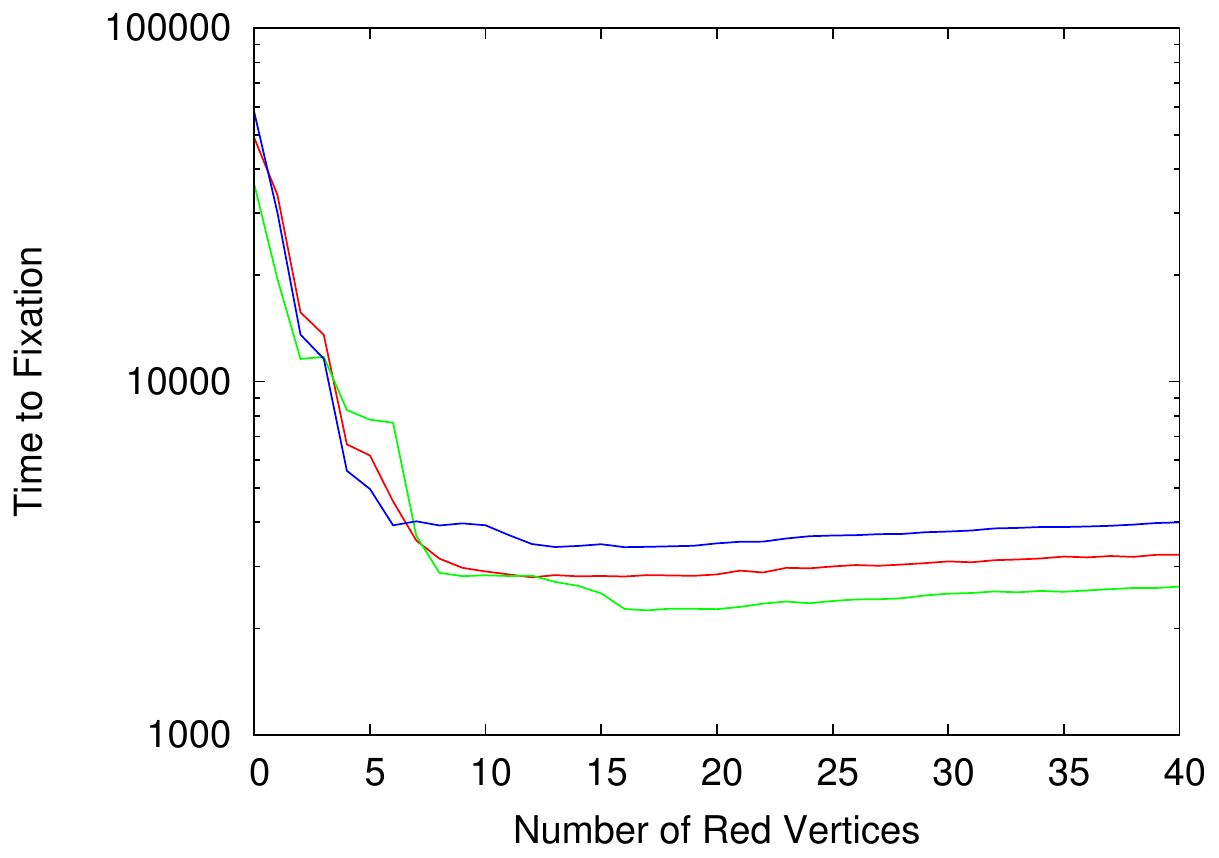}}
\caption{Barabasi-Albert scale-free graphs are a type of random graph generated with a preferential attachment algorithm \cite{barabasialbert99}. These graphs exhibit a power law degree distribution: very few vertices are of a higher degree, while many are at a low degree. An example graph generated with a Barabasi-Albert algorithm is in (a). Figure \ref{fig:BAtimes} reports the time to fixation in three sample BA graphs, each of size $N=40$ and the fecundity difference set arbitrarily at $r=2$. Once one of the BA graphs is generated, the vertices are ranked according to degree. The top $m$ vertices are coloured RED and the evolutionary process begins with a single red individual placed randomly on the graph. The process is run until $10,000$ fixation events occur.  The average number of steps required to reach fixation is plotted, using a log scale, against the number $m$ of RED sites. The key observation is that there is an initial steep drop in fixation time until the time is minimized. A gradual increase 
in time is observed as more of the vertices are coloured RED. In each example, the time to fixation when all the vertices are coloured RED is roughly $30\%$ greater than the minimum time to fixation, which generally occurs when roughly $16$ of the $40$ vertices are RED. }
\label{fig:BAresults}
\end{figure}

\subsection{Measures of Advantage}

Throughout this manuscript, we have assumed that a single type $X$ emerges in a population of all-$Y$ and either goes on to fixation or dies out. This notion of a single new type emerging in a pure state is a result of assuming that the probability of mutation $\mu$ is so small that the time between mutation events is much larger than the time it takes for a mutant individual to fix in, or die out from, a population. If we were to observe a population undergoing such a mutation/fixation/extinction process at a given point in time, then with high probability it will be in a pure state. Equivalently, as time goes to infinity the proportion of time the population spends in a pure state approaches $1$. This observation can be used as a measure of evolutionary advantage \cite{roussetbilliard00}: the more suited a type $X$ of individual is to the environment, the greater the expected time the population spends in state all-$X$.

Define $q_X$ to be the expected proportion of time the population spends in the all-$X$ state $S_X$, where $X$ is any permitted colour. This probability will, in general, depend on the mutation rates and the fixation probabilities. This is formalized in the following.

Again, since mutational events likely only occur when the population is in a pure state, we can consider the population as a Markov chain transitioning between pure states. This process is Markovian since the current state of the population only depends on the previous state. Having established this, we can write a general balance equation for the Markov chain:
\begin{eqnarray}
q_X \sum_{Y \in C\setminus \{X\}}\mu_{XY}\ \rho_{Y|X} = \sum_{Y \in C\setminus \{X\}} q_Y \ \mu_{YX} \ \rho_{X|Y},
\label{eq:balance}
\end{eqnarray}
where the sums are taken over all colours except $X$ and $\rho_{X|Y}$ is the probability that a single $X$ individual fixes in a population of all $Y$, and $\mu_{XY}$ is the probability a $Y$ appears in an all-$X$ population through mutation. The terms $\rho_{X|Y}$ and $\mu_{XY}$ in Equation (\ref{eq:balance}) are not simply the average fixation probability or mutation, but may depend on the configuration of $X$ and $Y$ sites and where in the population the $X$ or $Y$ emerges \cite{maciejewskihauertfu13}. This will be illustrated in a series of examples.

Equation (\ref{eq:balance}) establishes a system of equations for the $q_X$. A unique solution is found by incorporating the equation $\sum_{i} q_i = 1$. In general, the solution to this system is cumbersome, but a compact, intuitive solution can be given in certain situations.

A \emph{vertex-transitive} graph $G$ has the property that for any two vertices $v_i$ and $v_j$ of $G$, there exists an automorphism (a mapping from $G$ to $G$) $f$ of the vertices of $G$ such that $f(v_i) =v_j$, that is, $v_i$ is mapped into $v_j$ while preserving the structure of the graph. Intuitively, this property asserts that all vertices are equivalent; the graph ``looks" the same from any two vertices. Here we suppose that the symmetry is a property of the graph structure only. Due to the relative ease of calculations on vertex-transitive graphs, this class is extensively studied in the evolutionary graph theory literature \cite{taylordaywild07a, taylorlillicrapcownden11}. 

Properly two-coloured vertex-transitive graphs, like the $6$-cycle in Figure \ref{fig:6cycle} are a class of graphs on which Equation (\ref{eq:balance}) is easily solved. For the birth-death process Theorem \ref{thm:twocoloured} established that the fixation probability of either a red or blue type in a properly two-coloured graph is equal to the neutral fixation probability. Moreover, for properly two-coloured vertex-transitive graphs $\mu_{RB} = \mu_{BR}$, since for every instance of a red emerging on a colour $X$ vertex there is a corresponding instance of a blue emerging on an $X$ vertex with the same probability. All told, Equation (\ref{eq:balance}) reduces to $q_R = q_B=1/2$ on such graphs.

%In particular, if we suppose that the mutation rates are all equal, then in the neutral population, where $r=1$, a solution of Equation \ref{eq:balance} is 
%\begin{eqnarray}
%q_X = \frac{1}{|C|},
%\end{eqnarray}
%where $|C|$ is the total number of colours. An extension of this observation is the following.
%\begin{theorem}
%Let $G$ be a \emph{homogeneous} environmental evolutionary graph with colour set $C$ \textbf{What conditions are necessary for this type of theorem to hold?}. Suppose the probability of %mutation from any colour to any other is $\mu$. Denote the fraction of vertices coloured $X$ by $C_X$. As time goes to infinity, the proportion of time $q_X$ spent in the state $S_X$ is
%\begin{eqnarray}
% q_X = C_X. 
%\end{eqnarray}
%\end{theorem}

If we consider non-vertex-transitive graphs then the possibilities for the $q_X$ are many. Figure \ref{fig:threeRGBexamples} displays three graphs each with equal proportion of RED, GREEN, and BLUE sites, yet each example favours the three colours of individuals differently. In the $3$-cycle of Figure \ref{fig:3cycleRGB}, the fixation probabilities of each colour are equal as are the probabilities of any one type emerging in a pure state of any other type. Hence, $q_R = q_G=q_B = 1/3$. Figure \ref{fig:3lineRGB} is our $3$-line example from earlier. Supposing the probability of mutation between any two types is the same, we expect that the environment is ``most-suited'' for red---that is, red should have the greatest fixation probability---less-so for green and blue. We expect to find the population in a state of all-red more often than all-green and all-blue. In our notation, $q_R > q_G$ and $q_R > q_B$. Indeed, a quick calculation reveals that this is the case. In this example, $q_B=q_G$, but this does not 
necessarily follow from the advantage of red over blue and green. It is also possible to find a structure such that $C_R = C_G = C_B$, yet $q_R > q_G > q_B$. An example is Figure \ref{fig:weighted3lineRGB}. Here no edges emanate from the BLUE vertex so that any offspring produced there fail to secure a site. 

\begin{figure}[h!]
  \centering
  \subfigure[]{\label{fig:3cycleRGB}\includegraphics[width=0.25\textwidth]{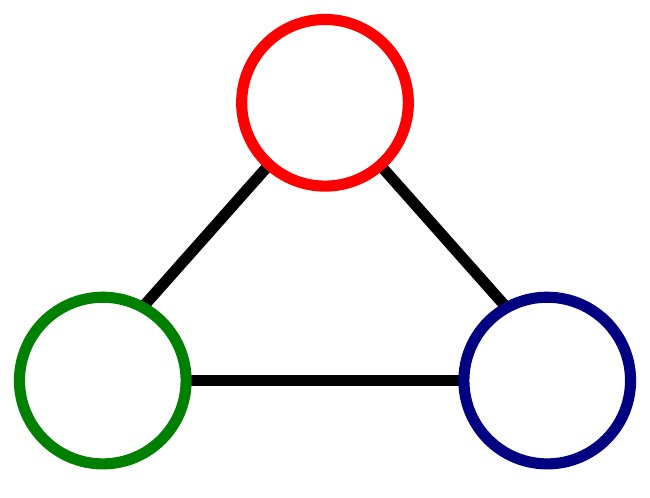}}
\hspace{1cm}
  \subfigure[]{\label{fig:3lineRGB}\includegraphics[width=0.25\textwidth]{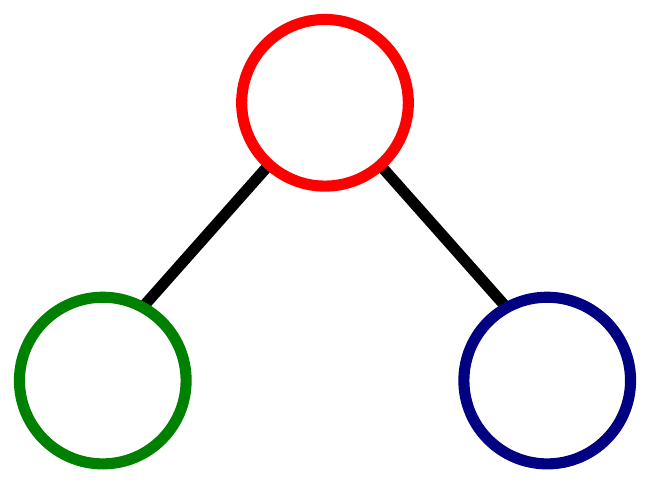}}
\hspace{1cm}
  \subfigure[]{\label{fig:weighted3lineRGB}\includegraphics[width=0.25\textwidth]{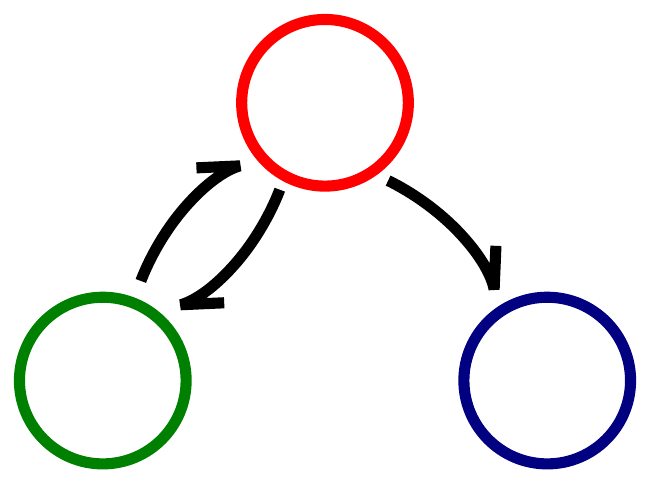}}
 \caption{Three examples of graphs with equal proportions of RED, GREEN, and BLUE sites that all differ in the evolutionary advantage experienced by the three corresponding colours of individuals. Edges indicate where offspring disperse. In a) and b) the edges are weighted uniformly. So, any offspring produced on any vertex in a) will disperse to a neighbouring vertex with probability $1/2$, while in b) offspring produced on the RED vertex will disperse to either neighbouring vertex with $1/2$, while an offspring produced on the GREEN or BLUE vertices will disperse to the RED vertex with probability $1$. For c), the edges are weighted as in b) yet there is no edge eminating from the BLUE vertex. Any offspring produced on the BLUE vertex does not disperse and fails to secure a vertex.  Denoting the proportion of time spent by a population in a state of all-$X$ by $q_X$, then, a) $q_R=q_B=q_G$; b) $q_R > q_G, q_B$, and $q_G=q_B$; and c) $q_R>q_G>q_B$. }
\label{fig:threeRGBexamples}
\end{figure}

The key observation here is, in general the fraction of the habitat best suited for a type $X$ is not sufficient to determine the evolutionary advantage of $X$. Information on the spatial arrangement of the sites is also important. This result is similar to results in the ecology literature \cite{cantrellcosner91, clarkethomaselmeshochberg97, latoregouldmortimer99, hiebelermichaudwassermanbuchak13}. For example, the authors of \cite{cantrellcosner91} consider a continuous, one-dimensional environment and suppose that some of the regions in the environment are more favourable than others. They found that it was not the proportion of these regions that mattered for population persistence, rather their location within the environment. Or, using a integro-difference equations, the authors of \cite{latoregouldmortimer99} found that an environment with lower-quality regions distributed throughout may be more suitable for a population than an environment of uniform high quality. 

\section{Discussion}

Building an understanding of how an environment shapes the evolution of a population is an ongoing challenge. There are now a plethora of models describing the evolution of populations in structured environments. These include island and deme structured \cite{wright31}, stepping stone \cite{kimuraweiss64}, lattice \cite{nowakmay92, nakamarumatsudaiwasa97}, metapopulations \cite{levins69}, and evolutionary graphs \cite{liebermanhauertnowak05}. Our model extends these spatial models by incorporating location-dependent fecundity. There is considerable evidence that patch quality affects an evolutionary process \cite{hobbshanley90, fleishman02, yamanaka09}. Our model allows explication of the effects of population structure and patch quality on an evolutionary process. 

The direct precursor to our model, evolutionary graph theory, is an extremely active area of research; see \cite{szabofath07} for a review. The constant fecundity process, as introduced in \cite{liebermanhauertnowak05}, is very well-understood---the \emph{circulation theorem} of \cite{liebermanhauertnowak05} completely describes the process on a large class of graphs. However, the majority of results in the evolutionary graph theory literature rely on some sort of symmetry in the population \cite{ohtsukinowak06, taylordaywild07a}. The challenge is to extend our understanding to \emph{heterogeneous} graphs. Heterogeneity may be introduced in a number of ways. One of the most common is considering graphs with vertices not all of the same degree. Previous work has shown that this distribution of vertex degrees affects the establishment of new types. For example, a mutant type may have an advantage if it  appears on a high-degree vertex while the population is undergoing the death-birth process and a 
disadvantage on the same vertex under the birth-death process \cite{antalrednersood06, broomrychtarstadler11, maciejewski13}. Environmental evolutionary graphs allow for another type of heterogeneity, one that does not depend on the degree of a vertex: an individual experiences an increase in fecundity simply if its type matches that of the vertex on which it resides. There is no reason to suppose an advantageous mutant is advantageous everywhere in the environment. A type of individual may flourish in one part of environment and flounder in another. Environmental evolutionary graphs are a convenient abstraction of this notion of location-dependent advantage. 

There are a few obvious extensions of the current work. One is to extend the current setup to include games played on environmental evolutionary graphs. There are a multitude of ways that this could be done. For example, each individual may have a baseline fecundity that depends on their location in the environment. Added to this is the payoff garnered from their game interactions. This could lead to variation in how the game affects the fitness of an individual: it is expected that in ``poor'' sites the game will matter more than in ``good'' sites. This is analogous to varying the selection strength, a factor known to affect the outcome of a game \cite{wualtrockwangtraulsen10}. Another possibility is that vertices could be thought of containing only so much of a resource and the individual occupying that vertex must decide between sharing or hoarding. 

%Another possibility is that the strength of selection could be site-dependent. Selection strength is known to affect the outcome of evolutionary games. Varying selection strength over the environment is an unexplored topic. Where selection is strongest, the payoffs garnered from game interactions matter more than where it is weakest. Such a setup may reveal new patterns of cooperation.  

%Another is to classify those environmental graphs that act as amplifiers or suppressors of selection. Previous work in this vein considered directed graphs of differing degrees \cite{nowak06a}. The current framework may produce an interesting array of evolutionary amplifiers/suppressors. 

Environmental evolutionary graphs are also interesting from a purely mathematical perspective. As was shown here, the birth-death process on properly two-coloured graphs does not depend on $r$. Even though it is doubtful that a ``properly two-coloured'' environment exists in nature, it is of interest to check if this is the largest class of graphs on which the birth-death process is independent of $r$. We also gave an example of a graph on which all colours have the same expected long-term share of the population. Is it possible to classify all such graphs? Also, certain colourings of environmental evolutionary graphs were shown here to decrease the time taken for a mutant invader to establish in the population. A general theory of population structures that minimize the time to fixation would be very interesting and may prove to have applications to populations management and the spread of disease on social or contact networks. 

%Since this article is one of the first collections of results on environmental evolutionary graphs, there is much to discover. 

\bibliographystyle{plain}
\bibliography{thesis}

\section{Appendix}

\subsection{Formal Definition of Environmental Evolutionary Graphs.}

Our intention in this first appendix is to place environmental evolutionary graph theory on a rigorous footing. We restrict our attention to two-coloured environmental evolutionary graphs for simplicity. All the following results can be extended to multi-coloured graphs. 

Let $G$ be any finite connected graph and let $\bgcolor$ be any function from $V(G)$ into $\{R, B\}$. We will regard $\bgcolor$ as the fixed \emph{background coloring} of $G$.

Let $\statespace$ be the set of all functions from $V(G)$ into $\{R,B\}$. These functions are the \emph{foreground colourings} of $G$. It is a simple exercise to verify that there are $2^{|G|}$ functions in $\statespace$.

Given any $S \in \statespace$ and any $w \in V(G)$, we may wish to talk about the state obtained by switching the colour of one vertex and leaving the rest
alone. Hence we define $\byflipping{S}{w}$ to be the state given by 
\[ [\byflipping{S}{w}](v) = \begin{cases} S(v) & \text{if $v \neq w$} \\
\red & \text{if $v = w$ and $S(w) = \blue$} \\ \blue & \text{if $v = w$ and $S(w) = \red$.}
\end{cases}
\]
For any $v \in V(G)$, we will use the notation $N(v)$ to refer to the set of neighbors of $v$, and for each $S \in \statespace$ we similarly define $\enemies{S}{v}$
to be the set of opposite-colour neighbors of $v$, given by \[ \enemies{S}{v} = \{w \in N(v) : S(w) \neq S(v) \}. \] 

We are now ready to define our transition matrix. Let $\transmatrix = [P_{ij}]$ be the $2^{|G|} \times 2^{|G|}$ matrix indexed by the states $S$ of the population with entry $i,j$ given by the probability $P_{ij}$ that the population transitions from state $i$ to state $j$. We wish to use $\transmatrix$ as the transition matrix for our Markov chain. In order to do this, we must prove the following lemma:
\begin{lemma}
$\transmatrix$ is well-defined and stochastic.
\end{lemma}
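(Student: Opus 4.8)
The plan is to turn the informal birth--death rule into an explicit formula for each entry $P_{ij}$ and then verify the two asserted properties directly from that formula. First I would fix a source state $S \in \statespace$ and describe a single step as the choice of an ordered pair $(u,w)$: a reproducing vertex $u$, chosen with probability $f_u / \sum_{v} f_v$ (all fecundities evaluated in $S$), followed by a displaced neighbour $w \in N(u)$ chosen uniformly, i.e.\ with probability $1/|N(u)|$. The new state is $S$ itself when $S(u) = S(w)$ (the offspring matches the colour already present at $w$) and is $\byflipping{S}{w}$ when $S(u) \neq S(w)$. Because such a step changes the foreground colour of at most one vertex, the only states reachable in one step are $S$ and the flips $\byflipping{S}{w}$; every remaining entry of the row indexed by $S$ is $0$. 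Collecting the elementary outcomes that produce each target yields
\[
P_{S,\,\byflipping{S}{w}} \;=\; \sum_{u \in \enemies{S}{w}} \frac{f_u}{\sum_{v} f_v}\cdot\frac{1}{|N(u)|},
\]
\[
P_{S,S} \;=\; \sum_{u \in V(G)} \frac{f_u}{\sum_{v} f_v}\cdot\frac{\bigl|\{w \in N(u) : S(w)=S(u)\}\bigr|}{|N(u)|},
\]
and I would adopt these as the definition of the corresponding row of $\transmatrix$.

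For well-definedness I would check that these expressions are unambiguous, genuine probabilities. The denominator $\sum_{v} f_v$ is strictly positive since every fecundity is at least $1$; and because $G$ is connected on at least two vertices, every $u$ satisfies $|N(u)| \geq 1$, so each uniform neighbour choice is legitimate and no factor $1/|N(u)|$ is undefined. Consequently every summand, hence every $P_{ij}$, is nonnegative. The one point requiring care is that the decomposition into elementary outcomes assigns each outcome to exactly one target state: this reduces to the facts that $\byflipping{S}{w} \neq S$ and that $\byflipping{S}{w} \neq \byflipping{S}{w'}$ whenever $w \neq w'$, both immediate from the definition of $\oplus$, since $\byflipping{S}{w}$ differs from $S$ precisely in coordinate $w$. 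Thus the map sending $(u,w)$ to its resulting state is well defined and the aggregation above involves no double counting.

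Stochasticity is then a one-line computation, which I would carry out without re-using the formula for $P_{S,S}$. The total probability mass carried by all elementary outcomes is
\[
\sum_{u \in V(G)} \frac{f_u}{\sum_{v} f_v} \sum_{w \in N(u)} \frac{1}{|N(u)|}
\;=\; \sum_{u \in V(G)} \frac{f_u}{\sum_{v} f_v}
\;=\; 1 ,
\]
using $\sum_{w \in N(u)} 1/|N(u)| = 1$. Since each elementary outcome contributes its probability to exactly one entry $P_{S,S'}$, summing over the target states $S'$ recovers exactly this total, so $\sum_{S' \in \statespace} P_{S,S'} = 1$ for every $S$; hence every row sums to $1$ and $\transmatrix$ is stochastic.

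I expect the only genuinely delicate step to be the bookkeeping in the well-definedness claim, namely confirming that partitioning the elementary outcomes by resulting state is legitimate (distinct single-vertex flips give distinct states, each distinct from the no-change outcome). Everything else reduces to the positivity of the fecundities and the minimum-degree bound coming from connectivity, and once the partition is in place both the consistency of the entries and the row-sum identity fall out of the same decomposition.
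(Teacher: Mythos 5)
Your proof is correct and follows essentially the same route as the paper's: well-definedness reduces to the same two facts about $\oplus$ (that $\byflipping{S}{w} \neq S$ and that $\byflipping{S}{w} \neq \byflipping{S}{w'}$ for $w \neq w'$), and stochasticity to nonnegativity plus rows summing to $1$. The only difference is one of explicitness: where the paper asserts the row sums equal $1$ ``by definition,'' you write out the entries $P_{S,S}$ and $P_{S,\byflipping{S}{w}}$ from the birth--death rule and actually carry out the row-sum computation, which makes the argument self-contained but introduces no new idea.
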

\begin{proof}
The only way $\transmatrix$ could fail to be well-defined is if $\byflipping{S}{v} = S$ for some $S,v$ or if $\byflipping{S}{v} = \byflipping{S}{w}$ for some $v \neq w$. It follows immediately from the definition of $\oplus$ that neither of these conditions can obtain, so that $\transmatrix$ is well-defined. By definition, the rows of $\transmatrix$ sum to $1$, and all its entries are nonnegative. Hence, $\transmatrix$ is stochastic.
\end{proof}

\begin{definition}
An \emph{environmental graph} is a graph $G$ equiped with a function $b: V(G)\to C$ and a real number $r \geq 1$. 
\end{definition}

We note that when we are concerned with conditional probabilities of the form $P(E | X_0 = S)$, the initial distribution $\psi$ is irrelevant, since all of the chains $G_\psi$ have the
same transition matrix $\transmatrix$. Hence we will not bother specifying an initial distribution in these circumstances: when we say that $G$ has some given property of stochastic processes, we mean that $G$ has that property for every starting vector.

We now deduce some elementary facts about the long-run behavior of the processes $G_\psi$. We suppose that $G$ initially consists of some mix of $R$ and $B$. This mix came about from the introduction of a mutant type in a pure state of the population. We suppose that the probability of mutation $\mu$ is essentially $0$ so that the population reaches a pure state before another mutation occurs. Because of this assumption, we outright ignore the mutation process for the time being.
\begin{proposition}\label{absorbing}
Let $G$ be any environmental graph. Then the pure, all-$R$ or $B$ states are absorbing in $G$. Moreover, with probability $1$, $G$ eventually reaches a pure state.
\end{proposition}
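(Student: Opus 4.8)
The plan is to prove the two assertions separately. First I would show that the all-$R$ and all-$B$ states are absorbing, which follows immediately from the structure of the transition matrix $\transmatrix$ set up above. Then I would show that from every state there is a positive-probability path of bounded length to a pure state, and conclude by a standard finite-Markov-chain argument that absorption occurs almost surely.

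For the absorbing property, recall that the only states reachable from a state $S$ in a single step are $S$ itself and the flipped states $\byflipping{S}{w}$ for $w \in V(G)$, and that a flip at $w$ can occur only if some opposite-colour neighbour reproduces onto $w$, i.e.\ only if $\enemies{S}{w} \neq \emptyset$. In the pure state $\allred$ every vertex is red, so $\enemies{\allred}{w} = \emptyset$ for every $w$; hence no flip transition has positive probability and the row of $\transmatrix$ indexed by $\allred$ places all of its mass on $\allred$ itself, so $\allred$ is absorbing. The identical argument, with colours interchanged, shows that $\allblue$ is absorbing.

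For almost-sure absorption, I would first establish the following claim: from any non-pure state $S$ there is a sequence of at most $\numverts = |V(G)|$ single-step transitions, each of positive probability, ending in $\allred$. The key point is that if the set $A$ of red vertices in $S$ is nonempty and a proper subset of $V(G)$, then connectivity of $G$ forces an edge between $A$ and $V(G)\setminus A$; such an edge joins a red vertex $u$ to a blue vertex $w$. With positive probability $u$ is chosen to reproduce and its offspring is placed on $w$, flipping $w$ to red and strictly enlarging $A$ (and the same conclusion holds under the death--birth rule, since $w$ can die and be reclaimed by $u$ with positive probability). Since a boundary edge persists as long as $A$ remains a proper nonempty subset, iterating this yields a positive-probability path to $\allred$ in at most $\numverts$ steps.

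Because $\statespace$ is finite, the minimum over all states of the probability of reaching a pure state within $\numverts$ steps is some constant $p > 0$. Consequently the probability of avoiding every pure state for $k\numverts$ consecutive steps is at most $(1-p)^k$, which tends to $0$ as $k \to \infty$; hence $G$ reaches a pure state with probability $1$. I expect the main obstacle to be the claim in the preceding paragraph: one must invoke connectivity correctly to guarantee a red--blue boundary edge in \emph{every} mixed state, and verify that the relevant reproduction and displacement probabilities along that edge are genuinely positive (which holds since every fecundity is at least $1$ and every dispersal weight is positive). Once that claim is in place, the finite-chain conclusion is routine.
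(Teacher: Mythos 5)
Your proposal is correct and takes essentially the same route as the paper's own proof: pure states are absorbing because $\enemies{S}{v}$ is empty for every $v$, and connectivity of $G$ supplies a red--blue boundary edge in any mixed state, yielding a positive-probability monotone path to $\allred$ in at most $|V(G)|$ steps. The only differences are cosmetic: the paper concludes by noting $S$ is transient (an absorbing state is accessible) and invoking standard finite-chain theory, whereas you spell that step out with the uniform bound $p>0$ and the $(1-p)^k$ estimate, and you additionally verify the death--birth rule explicitly, which the paper's proof passes over.
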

\begin{proof}
It is clear that the singleton containing any monochromatic state is a recurrent class, since if $S$ is a monochromatic state then $\enemies{S}{v}$ is empty for all $v$, so that $P_{S, T} = 0$ for all $T \neq S$.

To see that they are the only recurrent classes, let any non-monochromatic state $S \in \statespace$ be given. Suppose $S$ has $n$ blue vertices ($0 < n < |G|$). Since $G$ is connected, there exists some blue vertex $v$ with a red neighbor $w$. Since $f_{S_gw} > 0$, we see that $P_{S, \byflipping{S}{v}} > 0$, so with positive probability we may move from $S$ to a state with $n-1$ blue vertices. By the same argument, from that state we may move to one with $n-2$ blue vertices, and by induction we see that in $n$ steps we may move with positive probability to a state with $0$ blue vertices, i.e., the state $\allred$. Since $\allred$ is accessible from $S$ and $\allred$ is absorbing, we conclude that $S$ is not a recurrent state.
\end{proof}

The intuitive explanation accompanying the first half of this is obvious: since there is no mechanism for introducing genetic variation in this model, once an allele is gone it's gone for good. Hence $\allred$ and $\allblue$ are absorbing. That there are no other recurrent classes -- that extinction of one allele occurs almost always -- is less intuitively obvious, but is a standard feature of models derived from the Moran process.

%% As a consequence of Theorem~\ref{absorbing},
We will analyze this model by considering the probability, given an initial state $S$, that we end up in the all-red state versus the probability that we end up in the all-blue state. Let $X_n \to \allred$ denote the event that $X_n = \allred$ for all sufficiently large $n$, and define $X_n \to \allblue$ similarly. We then have the following definition.

\begin{definition}
Let $G$ be an environmental graph. Then the \emph{fixation probability vector} of $G$, written $\rho(G)$ or simply $\rho$, is the vector indexed by $\statespace$ whose $S$th entry $\rho_{R|S}$ is given by \[ \rho_{R|S} = P(X_n \to \allred | X_0 = S). \]
\end{definition}

For any particular environmental graph, it is possible in principle to manually calculate $\rho$ by the known techniques for dealing with absorbing Markov chains \cite{markov}, but since the size of $\transmatrix$ grows exponentially in $|G|$, this rapidly becomes impractical. We would therefore like to determine the values of $\rho$ analytically, when this is possible.

\begin{proposition}
Let $G$ be any environmental graph. Then $\rho = \transmatrix\rho$.
\end{proposition}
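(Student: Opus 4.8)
The plan is to show that $\rho$ is \emph{harmonic} for $\transmatrix$ by conditioning on the first transition and invoking the Markov property. I would fix an arbitrary state $S \in \statespace$ and recall that $\rho_{R|S} = P(X_n \to \allred \mid X_0 = S)$. Writing $A$ for the absorption event $\{X_n \to \allred\}$, the first step is the law of total probability, conditioning on the value of $X_1$:
\[
\rho_{R|S} = P(A \mid X_0 = S) = \sum_{T \in \statespace} P(X_1 = T \mid X_0 = S)\, P(A \mid X_0 = S, X_1 = T).
\]
Since $P(X_1 = T \mid X_0 = S) = P_{S,T}$ by definition of the transition matrix, it remains only to identify the inner conditional probability with $\rho_{R|T}$.

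The key observation is that $A$ is a \emph{tail event}: whether $X_n = \allred$ for all sufficiently large $n$ is unaffected by discarding any finite initial segment of the trajectory. Formally, if $Y_n := X_{n+1}$ denotes the one-step time shift, then $\{Y_n \to \allred\} = \{X_n \to \allred\} = A$. By the Markov property, conditioned on $X_1 = T$ the shifted process $(Y_n)$ is itself a realization of the chain with transition matrix $\transmatrix$ started from $T$, and is conditionally independent of $X_0$. Hence I would conclude
\[
P(A \mid X_0 = S, X_1 = T) = P(\{Y_n \to \allred\} \mid Y_0 = T) = \rho_{R|T},
\]
where the final equality uses time-homogeneity, valid because every chain $G_\psi$ shares the transition matrix $\transmatrix$. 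Substituting back yields $\rho_{R|S} = \sum_{T} P_{S,T}\, \rho_{R|T} = (\transmatrix \rho)_S$, and since $S$ was arbitrary this is exactly $\rho = \transmatrix \rho$.

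The main obstacle is the careful justification of the tail-event argument: one must confirm both that $A$ is genuinely invariant under the shift and that the Markov property may be applied to an event concerning the \emph{entire} infinite future rather than a single coordinate. I would handle this rigorously by approximating $A$ from within by the increasing family $A_N := \{X_n = \allred \text{ for all } n \geq N\}$, noting that $A = \bigcup_N A_N$ because $\allred$ is absorbing (Proposition~\ref{absorbing}). For each $N \geq 1$ the event $A_N$ depends only on the coordinates $X_1, X_2, \ldots$, so the ordinary Markov property at time $1$ applies without subtlety, and passing to the limit via monotone convergence transfers the identity to $A$ itself. Everything else is routine bookkeeping with the law of total probability, so this is the only point requiring genuine care.
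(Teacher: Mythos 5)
Your proposal is correct and follows essentially the same route as the paper's proof: a first-step decomposition via the law of total probability over $X_1 = T$, followed by the Markov property and time-homogeneity to identify $P(A \mid X_0 = S, X_1 = T)$ with $\rho_{R|T}$. The only difference is cosmetic --- your monotone approximation by the events $A_N$ makes rigorous the tail-event step that the paper handles with the informal remark that the event $X_n \to \allred$ ``only depends on the infinite tail'' of the chain.
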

\begin{proof}
Let any states $S, T \in \statespace$ be given. By elementary probability theory we have
\begin{align*}
P(\text{$X_n \to \allred$ and $X_1 = T$} | X_0 = S)
=&P(X_n \to \allred | \text{$X_1 = T$ and $X_0 = S$})\cdot\\
&P(X_1 = T | X_0 = S).
\end{align*}
By the Markov property and the definition of $\transmatrix$, this reduces to 
\[ P(\text{$X_n \to \allred$ and $X_1 = T$} | X_0 = S)
= P(X_n \to \allred | X_1 = T)P_{S,T}. \]
Since the transition probabilities are independent of $n$ and since the event $X_n \to \allred$ only depends on the infinite tail of the $X_i$, we see that \[ P(X_n \to \allred | X_1 = R) = P(X_n \to \allred | X_0 = S) = \rho_{R|S}. \] Hence
\[ P(\text{$X_n \to \allred$ and $X_1 = T$} | X_0 = S) = \rho_{R|S}P_{S,T}. \] On summing over all possible $T$ (since the events involved are clearly
mutually exclusive), we obtain \[ P(X_n \to \allred) = \sum_{T \in \statespace}\rho_{R|S}P_{S,T} = \transmatrix \rho. \]
\end{proof}
By simple algebraic manipulation of the above, we see that $\vec x$ is a solution of the linear system
\begin{eqnarray}
(I - \transmatrix)\rho = 0.
\label{eq:transprob} 
\end{eqnarray}
Since the rows of $\transmatrix$ sum to $1$, the rows of $I-\transmatrix$ sum to $0$, and so we see that the column vector whose entries are all $1$ is a solution of this system. Yet we know that $\rho_{R|B} = 0$ and $\rho_{R|R} = 1$, so that $\rho$ is linearly independent from the all-$1$ vector. The question
then naturally arises: is there a \emph{unique} (up to scaling) nonzero vector which is linearly independent of the all-$1$ vector? The following lemma answers this question in the affirmative:
\begin{lemma}\label{nullity}
Let $G$ be any environmental graph. Then the dimension of the null space of $(I-\transmatrix)$ is $2$.
\end{lemma}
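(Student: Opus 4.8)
The plan is to exploit the absorbing structure of the chain guaranteed by Proposition~\ref{absorbing}. First I would reorder the states of $\statespace$ so that the two absorbing states $\allred, \allblue$ come first, which puts $\transmatrix$ in block lower-triangular form
\[
\transmatrix = \begin{pmatrix} I_2 & 0 \\ R & Q \end{pmatrix},
\]
where $I_2$ is the $2\times 2$ identity acting on $\{\allred,\allblue\}$, the submatrix $Q$ records transitions among the transient (non-monochromatic) states, and $R$ records the one-step probabilities from transient states into the two absorbing states. This form is legitimate precisely because Proposition~\ref{absorbing} shows that $\allred$ and $\allblue$ are the only recurrent classes, so every remaining state is transient. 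Then
\[
I - \transmatrix = \begin{pmatrix} 0 & 0 \\ -R & I - Q \end{pmatrix},
\]
and a vector $(u,v)^{\top}$, with $u$ indexed by the two absorbing states and $v$ by the transient states, lies in the null space precisely when $(I-Q)v = Ru$ (the top block imposes no constraint).

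The crux is then to show that $I-Q$ is invertible. Granting this, $v = (I-Q)^{-1}Ru$ is uniquely determined by the free choice of $u \in \mathbb{R}^2$, so the coordinate projection $(u,v)^{\top} \mapsto u$ is a linear isomorphism from $\ker(I-\transmatrix)$ onto $\mathbb{R}^2$, giving the dimension exactly $2$.

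The main obstacle is exactly this invertibility of $I-Q$, and it is again Proposition~\ref{absorbing} that does the real work. I would establish it by showing that the spectral radius of $Q$ is strictly less than $1$, equivalently that $Q^n \to 0$ entrywise. The row sum $\sum_j (Q^n)_{ij}$ is precisely the probability that the chain started at transient state $i$ has not yet been absorbed by time $n$; since absorption into $\{\allred,\allblue\}$ occurs with probability $1$ from every transient state, this quantity tends to $0$ as $n \to \infty$. Hence $1$ is not an eigenvalue of $Q$ and $I-Q$ is nonsingular. Concretely, the accessibility argument in Proposition~\ref{absorbing} shows that from any non-monochromatic state one reaches $\allred$ within $|G|$ steps with positive probability; taking the minimum over the finitely many transient states yields a uniform $m \le |G|$ and $\epsilon > 0$ with absorption probability at least $\epsilon$ within $m$ steps, which forces the row sums of $Q^{km}$ to decay like $(1-\epsilon)^k$.

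Finally, to confirm that the dimension is not smaller and to exhibit an explicit basis, I would note that the two vectors $(\rho_{R|S})_{S \in \statespace}$ and $(1-\rho_{R|S})_{S \in \statespace}$ both lie in $\ker(I-\transmatrix)$: the first by the relation $\rho = \transmatrix\rho$ proved earlier, and the second because it is the difference of the all-ones vector (already shown to be a solution) and the first. They are linearly independent, since restricted to the coordinates $\allred, \allblue$ they equal the standard basis vectors of $\mathbb{R}^2$. This gives $\dim\ker(I-\transmatrix) \geq 2$ independently of the block computation and matches the isomorphism above, pinning the dimension at $2$.
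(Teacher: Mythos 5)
Your proposal is correct, and it follows the same underlying decomposition as the paper: both arguments hinge on isolating the transient block $Q$ and on the invertibility of $I-Q$. The differences are in what is proved versus cited, and in how the dimension bounds are obtained. The paper simply deletes the two rows and columns indexed by $\allred$ and $\allblue$, cites a textbook for the invertibility of $I-Q$, deduces $\nullity(I-\transmatrix)\leq 2$ from the fact that restoring two rows and columns can raise the nullity by at most $2$, and gets $\nullity(I-\transmatrix)\geq 2$ from the two zero rows of $I-\transmatrix$ at the absorbing states. You instead make the block structure explicit, $\transmatrix = \left(\begin{smallmatrix} I_2 & 0 \\ R & Q\end{smallmatrix}\right)$, and read off the exact dimension in one step via the isomorphism $(u,v)\mapsto u$ once $I-Q$ is known to be nonsingular; and, crucially, you prove that nonsingularity from scratch, deriving $Q^n \to 0$ from the almost-sure absorption in Proposition~\ref{absorbing} together with the uniform-over-transient-states bound (absorption with probability at least $\epsilon$ within $m \leq |G|$ steps, hence row sums of $Q^{km}$ decaying like $(1-\epsilon)^k$). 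This buys self-containedness: your argument leans only on Proposition~\ref{absorbing} rather than an external reference, and it makes visible exactly where the probabilistic input enters the linear algebra. Your alternative lower bound via the explicit null vectors $\rho$ and $\vec{1}-\rho$, which restrict to the standard basis on the coordinates $\allred,\allblue$, is also a nice touch, since it exhibits a concrete basis of the null space (the paper's zero-row argument shows only the dimension). One pedantic point worth stating if you write this up: conclude $Q^n \to 0$ along all $n$, not just the subsequence $km$, by noting that the row sums of $Q^n$ (the not-yet-absorbed probabilities) are non-increasing in $n$ because $Q$ is substochastic.
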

\begin{proof}
On removing the rows and columns corresponding to $\allred$ and $\allblue$ from $\transmatrix$, we are left with a matrix containing only the rows and columns corresponding to the transient states of the system. In the theory of absorbing Markov chains, this matrix is known as $Q$, and it is known that $I-Q$ is invertible
\cite[pp.~418]{markov}. Since we only add two rows and columns to $I-Q$ to obtain $I-\transmatrix$, we see that $\nullity(I-\transmatrix) \leq 2$. On the other hand, since the rows corresponding to $\allred$ and $\allblue$ contain only $0$, we see that $\nullity(I-\transmatrix) \geq 2$.
\end{proof}
We therefore have the following.
\begin{lemma}\label{strategy}
Let $G$ be any environmental graph, and
let $\vec y$ be any solution to the system $(I-\transmatrix)\vec{y} = 0$ such that
$\vec y_R= 1$ and $\vec y_B = 0$. Then $\vec y$ is the fixation 
probability vector of $G$.
\end{lemma}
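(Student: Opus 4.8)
The plan is to treat this as a pure linear-algebra statement, leaning on the two facts already in hand: that $(I-\transmatrix)\rho = 0$ (from the preceding proposition) and that the null space of $(I-\transmatrix)$ is exactly two-dimensional (Lemma \ref{nullity}). Since $\vec y$ is assumed to lie in this same null space, the whole argument reduces to finding a convenient basis of the null space with respect to which the two normalizing conditions $\vec y_R = 1$ and $\vec y_B = 0$ pin down $\vec y$ uniquely.

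First I would exhibit two explicit, linearly independent vectors in the null space of $(I-\transmatrix)$. The first is the all-ones vector $\mathbf{1}$, which lies in the null space because the rows of $I-\transmatrix$ sum to $0$, as noted just before Lemma \ref{nullity}. The second is the fixation probability vector $\rho$ itself. These two are linearly independent: evaluating any relation $\alpha\rho + \beta\mathbf{1} = 0$ at the $\allblue$ coordinate gives $\beta = 0$ (since $\rho_B = 0$), and then at the $\allred$ coordinate gives $\alpha = 0$ (since $\rho_R = 1$). Because the null space has dimension $2$ by Lemma \ref{nullity}, the pair $\{\rho, \mathbf{1}\}$ is therefore a basis for it.

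Next, since $\vec y$ solves $(I-\transmatrix)\vec y = 0$, it lies in the null space and so can be written as $\vec y = a\rho + b\mathbf{1}$ for some scalars $a,b$. I would then impose the two hypotheses on $\vec y$ coordinate-wise. Reading off the $\allblue$ entry gives $\vec y_B = a\rho_B + b = b$, so $b = 0$; reading off the $\allred$ entry gives $\vec y_R = a\rho_R + b = a$, so $a = 1$. Hence $\vec y = \rho$, which is exactly the fixation probability vector, completing the argument.

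I do not expect a genuine obstacle here: all the substantive content has been absorbed into Lemma \ref{nullity} (the dimension count via the invertibility of $I - Q$ for absorbing chains) and the earlier proposition giving $\rho \in \ker(I-\transmatrix)$. The only point requiring a moment's care is the linear independence of $\rho$ and $\mathbf{1}$, and this follows immediately from the boundary values $\rho_R = 1$, $\rho_B = 0$. The lemma is best read as the practical payoff of these structural results: it licenses solving the homogeneous system $(I-\transmatrix)\vec y = 0$ under the normalization $\vec y_R = 1$, $\vec y_B = 0$ and identifying the answer as $\rho$, which is precisely the strategy used to establish Theorem \ref{thm:twocoloured}.
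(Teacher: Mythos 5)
Your proposal is correct and follows essentially the same route as the paper: both arguments use Lemma~\ref{nullity} to write $\vec y = c_1\rho + c_2\vec{1}$ (the paper having already noted, just before the lemma, that $\rho$ and $\vec{1}$ are linearly independent via $\rho_{R|B}=0$ and $\rho_{R|R}=1$) and then use the normalizations $\vec y_B = 0$ and $\vec y_R = 1$ to force $c_2 = 0$ and $c_1 = 1$. Your write-up is if anything slightly more explicit than the paper's in spelling out that $\{\rho,\vec{1}\}$ is a basis of the null space, but the underlying argument is identical.
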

\begin{proof}
Since $\vec y_R \neq \vec y_B$, we see that $\vec y$ is linearly independent
of the all-$1$ vector. By Lemma~\ref{nullity}, this means that
$\vec y = c_1 \rho + c_2 \vec{1}$. Since
$\vec y_B = \rho_{R|B} = 0$, we have $c_2 = 0$; then since
$\vec y_R = \rho_{R|R} = 1$ we have $c_1 = 1$ so that $\vec y = \rho$.
\end{proof}
This defines our strategy: in order to prove that some candidate vector $\vec y$ is the absorption probability vector, we will only need to prove that it satisfies the conditions of Lemma~\ref{strategy}.

\subsection{A Mean-field Approximation}

We establish \ref{eq:fixprob} in a way similar to the proof of the fixation probability in the classical Moran process (see, \cite{moran58, nowak06a}). 
% First note that substituting $d=1$ yields the Moran fixation probability, as expected \cite{liebermanhauertnowak05}:
% \begin{eqnarray}
% \rho_{R|m} = \dfrac{1-\left( 1/r\right )^m }{1-\left(1/r\right)^N}.
% \label{eq:nodfixprob}
% \end{eqnarray}
% To prove the theorem, we argue as in \cite{nowak06a}. 
Let $i$ be the number of $R$ types on $G$. We need only the one-step transition probabilities $P_{i,i+1}$ of going from $i$ to $i+1$ and $P_{i,i-1}$ of going from $i$ to $i-1$. For the birth-death process, these are easily calculated as
\begin{eqnarray}
 P_{i,i+1} & = &  \left [ \frac{(1-d)i+rdi}{(rd+(1-d))i +(r(1-d)+d)(N-i)} \right ]\frac{N-i}{N}  \label{eq:up}\\
 P_{i,i-1} & = & \left [ \frac{r(1-d)(N-i)+d(N-i)}{(rd+(1-d))i +(r(1-d)+d)(N-i)}\right ] \frac{i}{N}. \label{eq:down}
\end{eqnarray}
Define 
\begin{eqnarray}
\label{eq:gamma}
 \gamma_i = \frac{P_{i,i-1}}{P_{i,i+1}} = \dfrac{r(1-d)+d}{(1-d)+rd}.
\end{eqnarray}
It can be shown that taking the product of the $N-1$ terms $\gamma_i$, as in \cite{nowak06a}, yields the fixation probability 
\begin{eqnarray}
\label{eq:prob}
\rho_{R|m} = \frac{\displaystyle 1+\sum_{k=1}^{m-1}\prod_{j=1}^k\gamma_m}{\displaystyle 1+\sum_{k=1}^{N-1}\prod_{j=1}^k\gamma_m}.
\end{eqnarray}
Substituting Equations (\ref{eq:up}) and (\ref{eq:down}) into Equation (\ref{eq:gamma}) and subsequently into Equation (\ref{eq:prob}) yields the approximation.

\subsection{Proof of Theorem \ref{thm:twocoloured}}
\begin{theorem*}
Given a properly two-coloured graph $G$ undergoing either a birth-death process and a set $M\subset V(G)$ of vertices occupied by $R$ (red) types then the probability $\rho_{R|M}$ that the $R$ fix in the population is
\begin{eqnarray}
\rho_{R|M} = \sum_{i \in M} \rho_{neutral|i},
\end{eqnarray}
where $\rho_{neutral|i}$ is the neutral fixation probability of a single $R$ starting at vertex $v_i$. 
\end{theorem*}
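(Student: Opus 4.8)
The plan is to avoid computing $\rho$ directly and instead invoke the verification strategy of Lemma~\ref{strategy}: I exhibit an explicit candidate vector $\vec y$ indexed by $\statespace$, check that it has the correct boundary values $\vec y_{\allred}=1$ and $\vec y_{\allblue}=0$, and check that it is harmonic for $\transmatrix$ (that is, $\vec y=\transmatrix\vec y$); Lemma~\ref{strategy} then forces $\vec y=\rho$. The natural candidate is the ``summed neutral fixation probability'' vector whose entry at a state $S$ is the total over the red-occupied vertices,
\[
\vec y_S=\sum_{i\,:\,S(v_i)=R}\rho_{neutral|i}.
\]
For the state $S$ whose red set is exactly $M$, $\vec y_S$ is precisely the right-hand side of~(\ref{eq:twocoloured}), so establishing $\vec y=\rho$ proves the theorem for every initial red set simultaneously.

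First I would dispose of the boundary values. The all-blue state $\allblue$ has no red vertices, so $\vec y_{\allblue}$ is an empty sum and equals $0$. For $\allred$ I would use the reproductive-value formula $\rho_{neutral|i}=(1/d_i)/\sum_j(1/d_j)$ of \cite{maciejewski13}, which gives $\sum_{i\in V(G)}\rho_{neutral|i}=1$; equivalently, in the neutral process exactly one of the $N$ founding lineages eventually fixes, and these mutually exclusive events have probabilities $\rho_{neutral|i}$ summing to $1$. Hence $\vec y_{\allred}=1$.

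The substantive step is harmonicity. Since $\sum_T P_{S,T}=1$, the identity $\vec y_S=\sum_T P_{S,T}\vec y_T$ is equivalent to $\sum_T P_{S,T}(\vec y_T-\vec y_S)=0$. The only states $T\neq S$ reachable in one birth-death step are the single flips $\byflipping{S}{v}$, and $\vec y_{\byflipping{S}{v}}-\vec y_S=\pm\rho_{neutral|v}$, with sign $+$ if $v$ is currently blue and $-$ if $v$ is currently red. Writing $P_{S,\byflipping{S}{v}}$ as the probability that some opposite-colour neighbour $u\in\enemies{S}{v}$ is chosen to reproduce (with probability $f_u/\sum_k f_k$) and then disperses onto $v$ (with probability $w_{uv}=1/d_u$), the harmonicity condition collapses into a sum over ordered adjacent pairs of opposite foreground colour. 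Grouping the two orientations of a single such edge $\{u,v\}$ (say $u$ red, $v$ blue), its net contribution is $\rho_{neutral|v}f_u w_{uv}-\rho_{neutral|u}f_v w_{vu}$.

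Two observations then finish the argument, and the second is the crux. First, the stated weighting $w_{uv}=1/d_u$ together with $\rho_{neutral|i}\propto 1/d_i$ yields the reversibility identity $\rho_{neutral|v}w_{uv}=\rho_{neutral|u}w_{vu}$ (both equal $1/(d_u d_v\sum_j 1/d_j)$), so the edge contribution factors as $c_{uv}(f_u-f_v)$ with $c_{uv}$ symmetric in $u,v$. Second---and this is exactly where proper two-colouring enters---on any edge whose endpoints carry opposite foreground colours, the endpoints necessarily have \emph{equal} fecundity: because adjacent background colours also alternate, $u$'s red foreground matches its background precisely when $v$'s blue foreground matches its (opposite) background, so either both endpoints match (each $f=r$) or neither does (each $f=1$). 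Thus $f_u=f_v$, every edge contributes $0$, harmonicity holds, and $\vec y=\rho$. I expect the main obstacle to be isolating this fecundity-matching phenomenon and recognizing that it makes the per-edge terms vanish individually, with no cross-cancellation between distinct edges required; the reduction to an edge sum and the boundary checks are then routine bookkeeping.
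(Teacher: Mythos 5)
Your proposal is correct and is essentially the paper's own argument: the paper likewise proves the result by verifying the candidate vector $\vec{y}_S=\sum_{i:\,S(v_i)=R}\rho_{neutral|i}$ through the criterion of Lemma~\ref{strategy}, its key Lemma is exactly your fecundity-matching observation (adjacent vertices of opposite foreground colour in a properly two-coloured graph have equal fecundity, since both match or both mismatch their alternating backgrounds), and its antisymmetric pairing $\tau(v_j,v_k)=-\tau(v_k,v_j)$ is the same per-edge cancellation, with both orientations of each bichromatic edge annihilating because $f_j=f_k$. If anything your writeup is slightly more careful than the paper's, since you check the boundary values $\vec{y}_{S_R}=1$ and $\vec{y}_{S_B}=0$ explicitly and carry the dispersal weights $w_{uv}=1/d_u$ through the transition probability, which the paper's Equation~(\ref{eq:bdtrans}) omits in a minor transcription slip.
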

The proof of this theorem relies on the following result. 
\begin{lemma}
Let $G$ be a properly two-coloured graph and let $S$ be the state of the population. For every $v_i\in V(G)$, we have $f_i(S) = f_j(S)$ for all $j\in \mathcal{N}_S'(v_i)$. 
\end{lemma}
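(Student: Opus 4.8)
The plan is to reduce the claim to a simple two-colour parity observation. First I would unpack the two hypotheses on the pair $(v_i, v_j)$ for an arbitrary $j \in \mathcal{N}_S'(v_i)$. Since $v_j$ is a neighbour of $v_i$ and $G$ is properly two-coloured, the \emph{background} colours must differ, so $c_i \neq c_j$. And since $j$ lies in $\mathcal{N}_S'(v_i)$, by definition the \emph{foreground} colours in state $S$ differ, so $S(v_i) \neq S(v_j)$. Because only the two colours $R$ and $B$ are available, each of these inequalities is in fact a colour-flip: $c_j$ is the opposite of $c_i$, and $S(v_j)$ is the opposite of $S(v_i)$.

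The key step is then the observation that fecundity depends only on whether foreground matches background, i.e.\ $f_i(S) = r$ precisely when $S(v_i) = c_i$ and $f_i(S) = 1$ otherwise, and that this match status is invariant under simultaneously flipping both the foreground and the background colour. Concretely, $S(v_j) = c_j$ holds if and only if the opposite of $S(v_i)$ equals the opposite of $c_i$, which holds if and only if $S(v_i) = c_i$. Hence $v_i$ matches its background exactly when $v_j$ matches its background, so the two vertices have the same match status and therefore $f_i(S) = f_j(S)$. Note this conclusion is independent of the value of $r$, as one would hope given the eventual application in Theorem~\ref{thm:twocoloured}.

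To make this airtight I would either record the short case analysis (take $c_i = R$ and $c_i = B$ in turn, and read off the forced values of $S(v_j)$ and $c_j$ in each case) or phrase it via the double-flip invariance above to avoid cases entirely. There is no real obstacle here: the entire content is that along any properly-coloured edge the two endpoints carry opposite backgrounds, so an opposite-foreground neighbour is ``aligned'' with its background in exactly the same way as $v_i$. The one point I would flag is that \emph{both} hypotheses must be used---opposite backgrounds (from the proper colouring) and opposite foregrounds (from membership in $\mathcal{N}_S'(v_i)$)---since dropping either one makes the conclusion false.
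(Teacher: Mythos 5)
Your proposal is correct and follows essentially the same route as the paper's proof: both use that the proper two-colouring forces opposite backgrounds and membership in $\mathcal{N}_S'(v_i)$ forces opposite foregrounds, so the match status (and hence the fecundity) of $v_i$ and $v_j$ coincide. Your ``double-flip invariance'' phrasing is just a cleaner packaging of the paper's two-case check ($S(v_i)=b(v_i)$ and $S(v_i)\neq b(v_i)$), so there is nothing substantive to add.
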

\begin{proof}
Since $G$ is properly two-coloured, $v_i$ and $v_j$ are of opposite colours. So, if $S(v_i)=b(v_i)$, then $S(v_j)=b(v_j)$, by virtue of $j$ being in $\mathcal{N}_S'(v_i)$. The same is true if $S(v_i)\neq b(v_i)$. 
\end{proof}
Recall that $S\oplus v_j$ is defined as the state obtained from state $S$ by switching the colour of the individual on vertex $v_j$. Now to prove Theorem \ref{thm:twocoloured}.
\begin{proof}
 Let $\rho$ be the vector of fixation probabilities indexed by $S$. Since $\rho$ is the fixation probability vector, it satisfies Equation (\ref{eq:transprob}). This yields,
\begin{eqnarray}
((I-\transmatrix)\rho)_S & = & (1-P_{S, S}) \rho_S - \sum_{T\neq S}P_{S,T}\rho_T.
\label{eq:firstlineofproof}
\end{eqnarray}
Since the population state can change by at most one vertex colour, the state $T$ is of the form $S\oplus v_j$ for some vertex $v_j$. This allows Equation (\ref{eq:firstlineofproof}) to be written
\begin{eqnarray}
& = & \rho_{S} \sum_{v_j\in V(G)} P_{S,S\oplus v_j} - \sum_{v_j \in V(G)} P_{S,S\oplus v_j}\rho_{S\oplus v_j}\nonumber \\
& = &  \sum_{v_j\in V(G)} P_{S,S\oplus v_j} (\rho_{S} -\rho_{S\oplus v_j}). 
\label{eq:midway}
\end{eqnarray}
It is at this stage of the proof that we require the population to be undergoing a birth-death process. This permits a calculation of the transition probability: 
\begin{eqnarray}
P_{S,S\oplus v_j} = \dfrac{\displaystyle \sum_{v_k \in \mathcal{N}'(v_j)}f_k}{\displaystyle \sum_{v_l \in V(G)} f_l}
\label{eq:bdtrans}
\end{eqnarray}

To proceed, notice that $v_j$ will be red in exactly one of $S$ and $S\oplus v_j$. Define
\begin{eqnarray}
\delta(v_j) = \left \{
\begin{array}{c c c}
1 & \hbox{if} & v_j = 1 \ \hbox{ in } S,\\
-1 & \hbox{if} & v_j = 1 \ \hbox{ in } S\oplus v_j.
\end{array} \right.
\end{eqnarray}
This allows for
\begin{eqnarray}
\rho_{S} -\rho_{S\oplus v_j} = \delta(v_j)  \dfrac{\displaystyle \dfrac{1}{d_j}}{\displaystyle \sum_{l \in V(G)} \dfrac{1}{d_l}}.
\end{eqnarray}
Substituting this into Equation (\ref{eq:midway}), and combining with Equation (\ref{eq:bdtrans}), yields
\begin{eqnarray}
& \displaystyle \sum_{v_j\in V(G)} P_{S,S\oplus v_j} (\rho_{S} -\rho_{S\oplus v_j}) = \\
&  \dfrac{1}{\displaystyle \sum_{v_j \in V(G)} f_j\cdot d_j}\cdot \displaystyle\sum_{v_j\in V(G)}\sum_{v_k \in \mathcal{N}'(v_j)}  \left (\delta(v_j) \displaystyle \dfrac{f_k}{d_i} \right ) \nonumber.
\label{eq:almostthere}
\end{eqnarray}
Denote the bracketed expression in Equation (\ref{eq:almostthere}) as $\tau(v_j,v_k)$. From Lemma 1,
\begin{eqnarray}
 \tau(v_j,v_k) = \delta(v_j) \displaystyle \dfrac{f_k}{d_i} = -\delta(v_k) \displaystyle \dfrac{f_j}{d_i} = -\tau(v_k,v_j),
\end{eqnarray}
for all $v_j\in V(G)$ and $v_k \in \mathcal{N}'(v_j)$. Since the sum in Equation (\ref{eq:midway}) is over all vertices, each $\tau(v_j,v_k)$ cancels with a $\tau(v_k, v_j)$. In all, 
\begin{eqnarray}
((I-P)\rho)_S, 
\end{eqnarray}
or, 
\begin{eqnarray}
 P\rho = \rho,
\end{eqnarray}
which establishes the theorem.
\end{proof}

\subsection{Calculations for Fixation Probability and Time to Fixation.}

This section focuses on the calculations needed for the fixation probability and time to fixation in the graph in Figure \ref{fig:onered}. Define the states $S_0=(0,0,0)$, $S_1 = (0,1,0)$, $S_2 = (1,0,0)$, $S_3=(1,1,0)$, and $S_4 = (1,1,1)$. These are the three possible states of the population, up to symmetry.

For the fixation probability define $\phi_i$ to be the probability that the population fixes at a state of all $R$ given that it started in state $S_i$. The $\phi_i$ satisfy the system of equations
\begin{eqnarray}
\label{eq:trans}
\phi_0 & = & 0 \nonumber \\
\phi_1 & = &  P_{1, 0}\phi_{0} + P_{1,3}\phi_3 +(1-P_{1, 0}-P_{1, 3})\phi_1,\nonumber \\
\phi_2 & = & P_{2, 0}\phi_{0} +  P_{2,3}\phi_3 +(1-P_{2, 0}-P_{2,3})\phi_2, \nonumber \\
\phi_3 & = & P_{3, 1}\phi_{1} +  P_{3,2}\phi_2 + P_{3,4} + (1-P_{3, 1}-P_{3, 2}-P_{3,4})\phi_3, \nonumber \\
\phi_4 & = & 1,
\label{eq:sys1}
\end{eqnarray}
where $P_{i,j}$ is the probability of transitioning from state $S_i$ to state $S_j$. For the population under consideration undergoing a birth-death process, 
\begin{eqnarray}
 P_{1,0}  = \dfrac{2}{2+r} ,      & P_{2,0}  =  \dfrac{1}{2\cdot3}, & P_{1,3} = \dfrac{r}{2+r}  \nonumber \\
 P_{2,3} = \dfrac{1}{2+r}, & P_{3,4} = \left(\dfrac{1}{2}\right)\dfrac{r}{2+r}, & P_{3,3} = \dfrac{1}{2+r}+\left(\frac{1}{2}\right) \dfrac{r}{2+r}.
\end{eqnarray}
These are substituted into System (\ref{eq:sys1}) and solved. The solutions are then used to generate Equation (\ref{eq:onered}) by weighting by the probability that the mutant arises on either leaf or the hub. Suppose the population initially consists of all blue individuals and a mutation occurs, producing a red offspring. This offspring appears on either leaf with probability $1/3$ and on the hub with probability $2/3$. This yields
\begin{eqnarray}
 \overline{\rho^*} & = & 2\cdot \frac{1}{3} \cdot \phi_2 + \frac{2}{3} \cdot \phi_1 \nonumber \\
\ & = & \frac{2r^2(1+r)}{2r^3+5r^2+4r+4}.
\end{eqnarray}
A similar calculation is employed to generate Equation (\ref{eq:onered}).

For the time to fixation, we use an approach similar to \cite{antalscheuring06}; see also \cite{traulsenhauert09}. Define $T_i$ to be the time the population takes to reach fixation conditioned on the event that the population reaches fixation given that it currently is in state $S_i$, where the states are as above. For the $3$-line example, the $T_i$ satisfy
\begin{eqnarray}
\label{eq:times}
\phi_1 T_1 & = &  P_{1,3}\phi_3(T_3 + 1),\nonumber \\
\phi_2 T_2 & = &  P_{2,3}\phi_3(T_3 + 1) +(1-P_{2, 0}-P_{2,3})\phi_2(T_2+1), \hbox{ and}\nonumber \\
\phi_3 T_3 & = & P_{3, 1}\phi_{1}(T_1+1) + P_{3,2}\phi_2(T_2+1)+(1-P_{3, 1}-P_{3, 2})\phi_3(T_3+1), 
\label{eq:sys2}
\end{eqnarray}
where the $\phi_i$ are as above. These solve to the equations used to generate Figure \ref{fig:fixationtimeline}.
\end{document}